\newcommand{\f}[2]{\frac{#1}{#2}}
\newcommand{\h}[0]{{\mathcal R \mathcal H}_\infty}
\renewcommand{\d}[0]{\mathbb{D}}
\renewcommand{\c}[0]{\partial \mathbb{D}}
\newcommand{\dc}[0]{\overline{\mathbb{D}}}
\newcommand{\p}[0]{\mathcal{P}}
\newtheorem{theorem}{Theorem}
\newtheorem{lemma}{Lemma}
\newtheorem{corollary}{Corollary}
\title{Approximation by Simple Poles -- Part II: System Level Synthesis Beyond
  Finite Impulse Response}
\author{Michael W. Fisher, Gabriela Hug, and Florian D\"{o}rfler
  \thanks{This paper is based upon work supported by the King
  Abdullah University of Science and Technology (KAUST) Office of
  Sponsored Research (award No. OSR-2019-CoE-NEOM-4178.11) and by the
  European Union’s Horizon 2020 research and innovation program (grant
  agreement No. 883985).

  F. D\"{o}rfler and G. Hug are with
  ETH Z\"{u}rich, 8092 Z\"{u}rich, Switzerland.
  
  M. W. Fisher is with University of Waterloo, Waterloo, Ontario, Canada.

  Email: \{mfisher, dorfler\}@ethz.ch; hug@eeh.ee.ethz.ch.
}
}
\begin{document}

\maketitle

\begin{abstract}
  In Part I, a novel Galerkin-type method for finite dimensional approximations
  of transfer functions in Hardy space was developed based on approximation by
  simple poles.
  In Part II, this approximation is applied to system level
  synthesis, a recent approach based on a clever
  reparameterization, 
  to develop a new technique for optimal control design.
  To solve system level synthesis problems, prior work
  relies on finite impulse response
  approximations that lead to
  deadbeat control, and that can experience infeasibility and increased
  suboptimality, especially in systems with large separation of time scales.  
  The new design method
  does not result in deadbeat
  control, is convex and tractable, always feasible,
  can incorporate prior knowledge, and works well for systems
  with large separation of time scales.
  Suboptimality bounds with convergence rate depending on the geometry of the
  pole selection are provided.
  An example demonstrates
  superior performance of the method.
\end{abstract}

\section{Introduction}\label{sec:intro}

In Part I \cite{Fi22a} the approximation by a finite collection of transfer
functions with simple poles
was studied as a Galerkin-type method for
approximating transfer functions in Hardy space.
This paper applies this simple pole approximation (SPA) to optimal design of
linear feedback controllers.
A powerful approach for solving optimal control problems involves
convex reparameterization of stabilizing controllers, examples of which include
the Youla parameterization \cite{Yo76},
input-ouput parameterization (IOP) \cite{Fu19},
and system level synthesis (SLS) \cite{Wa18,Wa19}.
For this work, we focus on the closed-loop system responses for state
feedback controllers, and so restrict
our attention to SLS rather than Youla (which does not directly parameterize
using the closed-loop responses) or IOP (which focuses on output feedback).

The SLS reparameterization for mixed $\mathcal{H}_2/\mathcal{H}_\infty$
synthesis results in a convex but infinite dimensional optimization problem.
In order to solve it, prior work \cite{An19} has approximated that the
closed-loop responses are finite impulse responses (FIR) in order to arrive
at a tractable finite dimensional optimization problem.
However, this results in deadbeat control (DBC), which often experiences poorly
damped oscillations between discrete sampling times that can even
persist in steady state, as well as lack of robustness to model uncertainty and
parameter variations because of the high
control gains required to reach the origin in finite time \cite{Ur87}.
We denote SLS with the FIR approximation by DBC for the remainder
of the paper.




With DBC, the number of poles in the closed-loop transfer functions is
equal to the length of the FIR, potentially resulting in large numbers
of poles that can lead to high
computational complexity for the control design, lack of robustness in
the resulting controller, and implementation challenges in practice
\cite[Chapter 19]{Zh95}.
This is especially problematic when the optimal solution
has a long settling time, such as in systems with large separation of time
scales, where short sampling times are needed to capture the fast dynamics,
which are also coupled with much slower dynamics.
This leads to closed-loop impulse responses settling only after a large number
of time steps.
In addition, FIR closed-loop responses 
have all poles at the origin,
which results in infeasibility in case of stable but uncontrollable poles
in the plant. To resolve this, DBC introduces a slack variable enabling
constraint violation, 
which
leads to additional suboptimality \cite{An19}.
Furthermore, in this case DBC leads to a quasi-convex problem, requiring
an iterative approach such as golden section search to solve rather
a single convex optimization \cite{An19}.

This work combines SLS with SPA \cite{Fi22a} to develop a new control method
which addresses these limitations.
This approach is not FIR, so it does not suffer from the drawbacks of deadbeat
control.
Moreover, the number of poles is independent of the settling time of the
optimal closed-loop responses, and therefore SPA even works well for systems
with large separation of timescales.
It results in a convex and tractable optimization for the
design, avoiding the need for iterative methods,
requires only a small number of poles,
guarantees
feasibility 
for stabilizable systems without introducing slack variables, and
additional suboptimality resulting from these can be avoided.
Finally, if prior information is known about the optimal solution, such as the
locations of some of the optimal poles (e.g., for model matching 
\cite{Sh09},
model reference control \cite{Ab09}, design based on the internal model
principle \cite{Fr76}, expensive control \cite[Theorem~3.12(b)]{Kw72}, etc.),
then these
can be incorporated directly into the design for improved performance.

A suboptimality certificate is provided which shows the convergence
rate of SPA to the ground-truth optimal solution based on the geometry of
the pole selection.  Unlike a similar certificate for DBC, this does not
require a long enough time horizon for the optimal impulse response to
decay to be valid, and its convergence rate does not depend on this decay rate.
This certificate is then specialized to a particular pole selection
based on an Archimedes spiral as in \cite[Theorem~4]{Fi22a}.
An example shows superior performance of
SPA over DBC, and is fully reproducible with all code publicly
available \cite{git_sls}.


The paper is organized as follows.
Section~\ref{sec:back} provides preliminaries and problem setup,
Section~\ref{sec:review} reviews SLS,
Section~\ref{sec:res} provides the SPA method and suboptimality certificates,
Section~\ref{sec:ex} shows an illustrative example, 
Section~\ref{sec:proof} gives the proofs, 
and Section~\ref{sec:con} offers concluding remarks.



\section{Preliminaries}\label{sec:back}

We use the same notation as in Part I \cite{Fi22a},
and refer the reader to the preliminaries and main results sections there for
further details.  Recall also Assumptions A1-A5 from Part I.

\subsection{Problem Setup}

Consider the following LTI system in discrete time
\begin{align}
  \begin{split}
  x(k+1) &= Ax(k) + Bu(k) + \hat{B}w(k) \\
  y(k) &= Cx(k) + Du(k)
  \end{split}
  \label{eq:sys}
\end{align}
where $x(k) \in \mathbb{R}^n$, $u(k) \in \mathbb{R}^p$,
$w(k) \in \mathbb{R}^q$, and $y(k) \in \mathbb{R}^m$ are the state,
controller input,
disturbance input, and performance output vectors at time step
$k$, respectively.
Let $\sigma$ be the plant poles (i.e., the eigenvalues of $A$).
It will be useful to introduce the following related system
\begin{align*}
  \begin{split}
  x(k+1) &= Ax(k) + Bu(k) + v(k) \\
  y(k) &= Cx(k) + Du(k)
  \end{split}
\end{align*}
where $v(k) \in \mathbb{R}^n$ and the other signals are defined analogously to
\eqref{eq:sys}.
For signals $u(z)$ and $y(z)$ in the $z$-domain,
  let $T_{u \to y}(z)$ denote the transfer
function from $u(z)$ to $y(z)$.
Consider a linear (possibly dynamic) state feedback control law of the form
$u(z) = K(z)x(z)$ where $K \in \h$, and
let $T_{\text{desired}}(z)$ be some desired closed-loop
transfer function for model reference or model matching control
(note that we can set $T_{\text{desired}}(z) = 0$ if desired).
The goal is to choose a controller $K(z)$ that is a solution to the
mixed $\mathcal{H}_2$/$\mathcal{H}_\infty$ \cite{Zh95,Du00} optimal control
problem given by
  \begin{align}
    \begin{split}
    & \min_{K(z)}
      \left|\left|
      T_{w \to y}(z) - T_{\text{desired}}(z) \right|\right|_{\mathcal{H}_2} \\
      & \quad \quad
      + \lambda
      \left|\left|
      T_{w \to y}(z) - T_{\text{desired}}(z) \right|\right|_{\mathcal{H}_\infty} \\
    & \;\text{s.t.} \quad  T_{v \to x}(z), T_{v \to u}(z) \in \f{1}{z} \h,
    \label{eq:goal}
    \end{split}
  \end{align}
  where $\lambda \in [0,\infty]$ is constant.
  As $T_{w \to y}(z)$ is nonconvex in $K(z)$, \eqref{eq:goal} is known to be a
  challenging problem.
  We make the following feasibility assumption:

  \vspace{5pt}

  { \it
  \indent (A6) A solution to \eqref{eq:goal} exists,
  i.e., $(A,B)$ is stabilizable, \indent and the optimal closed-loop transfer
  functions
  are rational \indent (hence they have finitely many poles).}

  \vspace{5pt}  
  
  While one can construct pathological examples where this assumption does not
  hold (e.g., a controllable SISO system with $y = x$ and
  $T_{\text{desired}}(z) = e^{\f{1}{z}}$),
  in the standard mixed $\mathcal{H}_2$/$\mathcal{H}_\infty$ setting
  Assumption A6 is satisfied automatically \cite{Wa94}.


  By Assumption A6 there exists an optimal solution $(T_{v \to x}^*,T_{v \to u}^*)$
  to \eqref{eq:goal}.
  As $T_{v \to x}^*, T_{v \to u}^* \in \f{1}{z}\h$, we can write their partial
  fraction decomposition as
  \begin{align}
    \begin{split}
      T_{v \to u}^*(z) &= \sum\nolimits_{q \in \mathcal{Q}} \sum\nolimits_{j=1}^{m_q^*}
      H_{(q,j)}^*
      \f{1}{(z-q)^j} \\
      T_{v \to x}^*(z) &= \sum\nolimits_{q \in \hat{\mathcal{Q}}}
      \sum\nolimits_{j=1}^{\hat{m}_q} G_{(q,j)}^*
      \f{1}{(z-q)^j}
      \label{eq:phi_opt}
    \end{split}
  \end{align}
  where $\mathcal{Q}$ and $\hat{\mathcal{Q}}$ are finite sets of stable poles
  closed under complex conjugation, $H_{(q,j)}^*$ and $G_{(q,j)}^*$ are
  coefficient matrices,
  and $m_q^*$ and $\hat{m}_q$ are the
  multiplicities of the pole $q$ in $T_{v \to u}^*$ and $T_{v \to x}^*$,
  respectively.
  It will be shown (in the proof of Lemma~\ref{lem:bound})
    that the following relationship
    between the poles $\mathcal{Q}$ and $\hat{\mathcal{Q}}$ holds:
    $\hat{\mathcal{Q}} \subset \mathcal{Q} \cup \sigma$.
    Thus, each pole of $T_{v \to x}^*$ must be a pole of at least one of
    $T_{v \to u}^*$ and the plant.

\section{Review of System Level Synthesis} \label{sec:review}

\subsection{System Level Parameterization}    
  
A recent approach was proposed to solve problem \eqref{eq:goal}
for the special case,
  where
  $y = \left[\begin{smallmatrix} (Qx)^\intercal & (Ru)^\intercal \end{smallmatrix}\right]^\intercal$
  for constant matrices $Q$ and $R$,
  $T_{\text{desired}}(z) = 0$, and $\hat{B} = I$.
  This approach is known as system level synthesis (SLS) \cite{An19}, and
  the key idea is to reparameterize the control design in terms of the
  closed-loop transfer functions $\Phi_x(z) = T_{v \to x}(z)$ and
  $\Phi_u(z) = T_{v \to u}(z)$.
  This transforms \eqref{eq:goal} into an infinite dimensional convex
  optimization
  problem at the price of an additional affine constraint 
  (further details are given in \cite{An19}).
  Noting that $T_{w \to y}(z) = C T_{v \to x}(z)\hat{B} + DT_{v \to u}(z)\hat{B}$,
  this results in
  \begin{align}
  \begin{split}
    \min_{\Phi_x(z),\Phi_u(z)} &
    \left|\left|
    C\Phi_x(z)\hat{B} + D\Phi_u(z)\hat{B} - T_{\text{desired}}(z)
    \right|\right|_{\mathcal{H}_2} \\
    & + \lambda \left|\left|
    C\Phi_x(z)\hat{B} + D\Phi_u(z)\hat{B} - T_{\text{desired}}(z)   
    \right|\right|_{\mathcal{H}_\infty} \\
    \text{s.t.} \quad & (zI-A)\Phi_x(z) - B\Phi_u(z) = I \\
    & \Phi_x(z), \Phi_u(z) \in \f{1}{z} \h.
    \label{eq:spa_inf}
  \end{split}
  \end{align}
  This is a strict generalization of \cite[Eq.~4.35]{An19} to
  \eqref{eq:goal}.
  After solving \eqref{eq:spa_inf}, a 
  controller that yields the optimal closed-loop responses 
  can be
  recovered via
  $K(z) = \Phi_u(z)\Phi_x^{-1}(z)$, and realizations of $K(z)$ exist which do
  not require
  transfer function inversion (for more details see \cite{An19}).
  

  

  \subsection{Finite Impulse Response Approximation} \label{sec:fir}
  
  To obtain a tractable optimization problem, the FIR approximation is made
  for the closed-loop transfer functions $\Phi_x$ and $\Phi_u$ \cite{An19},
  i.e.,
  $\Phi_x(z) = \sum\nolimits_{i=1}^T G_i \f{1}{z^i}$
  and $\Phi_u(z) = \sum\nolimits_{i=1}^T H_i \f{1}{z^i}$,
  where $G_i$ and $H_i$ are coefficient matrices.
  For any positive integer $T$, we call this DBC with FIR length $T$.
  If the plant is uncontrollable, then it has stable poles which cannot be
  removed by feedback,
  so it is infeasible to achieve FIR closed-loop transfer functions.
  To maintain feasibility, 
  DBC introduces a slack variable $V$
  that allows the affine constraints to be violated.
  However, the objective becomes non-convex as a result,
  so
  DBC uses a quasi-convex upper bound of the objective \cite{An19}.
  The resulting control design
  is quasi-convex and finite dimensional, 
  and can be solved
  using methods such as golden section search.
  The true (i.e., realized) closed-loop responses are then given by
      $T_{v \to x}(z) = \Phi_x(z)\left(I + \f{V}{z^T}\right)^{-1}$ and 
      $T_{v \to u}(z) = \Phi_u(z)\left(I + \f{V}{z^T}\right)^{-1}$ \cite{An19}.

  \subsection{System Level Synthesis Certificates}\label{sec:sls_cert}
  
  Let $J^*$ be the ground-truth optimal cost of \eqref{eq:goal}, and let
  $J(T)$ be the optimal cost of DBC with an FIR of length $T$.
  Let $(\Phi_x^*,\Phi_u^*)$ be an optimal solution to \eqref{eq:goal}.
  Then there exist constants $C_*, \rho_* > 0$ such that
  $||\mathcal{I}(\Phi_x^*)(k)||_2 \leq C_*\rho_*^k$ for all $k \geq 0$.
  Then $T$ sufficiently large such that $C_*\rho_*^T < 1$ is sufficient for DBC
  to be feasible
  and 
  to satisfy the following
  suboptimality bound \cite[Theorem~4.7]{An19} for some $c > 0$, which is shown
  as a relative error bound for ease of comparison:
  \begin{align}
    \f{J(T)-J^*}{J^*} \leq  \f{C_*\rho_*^T}{1- C_*\rho_*^T}
    \left(1 + \f{\lambda c}{1-\rho_*^T}\right).
    \label{eq:sls_bound}
  \end{align}

  When $\rho_*$ is small (i.e. the optimal closed-loop response is
  slow), such as for systems with large separation of
  time scales, $C_* \rho^T < 1$ may require large $T$, the convergence rate
  of $C_* \rho^T$ in \eqref{eq:sls_bound} is slow, and the term
  $\f{1}{1-C_*\rho_*^T}$ in \eqref{eq:sls_bound} (which arises from the slack
  variable $V$) will further slow convergence.

\section{Main Results}\label{sec:res}

\subsection{Simple Pole Approximation (SPA) Control Design}
\label{sec:spa}


To introduce our new method, we begin by reformulating
\eqref{eq:goal} using the SLS reparameterization, which results in the
following convex but infinite dimensional optimization problem
which is a strict generalization of the formulation in \cite{An19}:

Recall that $\sigma$ are the poles of the plant, where each $q \in \sigma$ has
multiplicity $m_q$, and $\p$ represents a selection
of poles within the unit disk \cite{Fi22a}.
To obtain a tractable optimization problem, we approximate $\Phi_x$ and
$\Phi_u$ using $\mathcal{P}$ and $\sigma$ by
\begin{align}
  \begin{split}
  \Phi_u(z) &= \sum\nolimits_{p \in \mathcal{P}} H_p \f{1}{z-p} \\  
  \Phi_x(z) &= \sum\nolimits_{p \in \mathcal{P}} G_p \f{1}{z-p}
  + \sum\nolimits_{q \in \sigma} \sum\nolimits_{i=1}^{m_q+1} G_{(q,i)} \f{1}{(z-q)^i}
  \label{eq:spa_approx}
  \end{split}
\end{align}
where $H_p$, $G_p$, and $G_{(q,i)}$ are coefficient matrices.
We refer to this as the simple pole approximation (SPA) since all of the poles
other than the poles of the plant in $\Phi_x$ are simple.
As we will see, the poles in $\sigma$ are included in $\Phi_x$
with multiplicities potentially greater than one in order to ensure feasibility
in case the plant is stabilizable but not controllable.
However, 
it is not necessary (though
possible if desired) to include
poles with multiplicity greater than one in $\Phi_u$ as well, which is
why there is an asymmetry in the approximations of $\Phi_u$ and $\Phi_x$ in
\eqref{eq:spa_approx}.
Note that the coefficients for the plant poles range from $1$ to $m_q+1$ in
$\Phi_x$, since if $\Phi_u$ has a pole at the same location as the plant, and
because $\Phi_u$ has only simple poles, it is
possible to increase the multiplicity of this pole by one.

Although it is possible to select any poles $\mathcal{P} \subset \d$
for the SPA method, we provide several recommendations that often lead to
improved performance.  First, we suggest to include the poles of the plant
$\sigma$ in $\p$ to allow the design to cancel out any
controllable modes of the plant for which it is advantageous to do so.
In addition, for any poles of the optimal solution which are
known a priori (see Section~\ref{sec:intro}), including these in $\p$ can
lead to a dramatic improvement in performance.
For the remaining poles, the Archimedes spiral is a natural choice as it
provides an approximately even
pole selection over $\d$ and converges at the rate $(|\p|+2)^{-1/2}$
\cite{Fi22a}.


For any $q \in \sigma$, let $\tilde{m}_q = 1$ if $q \in \p$ and
$\tilde{m}_q = 0$ otherwise.
Then the SPA of \eqref{eq:spa_approx} applied to \eqref{eq:spa_inf}
results in the following optimal control design problem,
consisting of the objective
  \begin{align}
    \begin{split}
      \min_{H_p,G_p,G_{(q,i)}} &
      \left|\left|
      \mathcal{I}(C\Phi_x\hat{B})
      + \mathcal{I}(D\Phi_u\hat{B}) - \mathcal{I}(T_{\text{desired}})
      \right|\right|_F \\
    & + \lambda \left|\left|
      \mathcal{C}(C\Phi_x\hat{B})
      + \mathcal{C}(D\Phi_u\hat{B}) - \mathcal{C}(T_{\text{desired}})
      \right|\right|_2,
    \end{split} \label{eq:spa_obj}
  \end{align}
  subject to the following SLS constraints (whose form given below is
  derived in the proof of Lemma~\ref{lem:bound}):
  \begin{align}
    \begin{split}
    G_{(q,2)} + (qI-A)G_{(q,1)} 
    -BH_q &= 0, \quad 
    \forall~ q \in \sigma \cap \p \\
    (pI-A)G_p - BH_p &= 0, \quad \forall~ p \in \p - \sigma \\
    G_{(q,i+1)} + (qI-A)G_{(q,i)} &= 0, 
    \quad
    \forall~
    q \in \sigma, \\
    & \quad \quad \quad \;  i \in \{1+\tilde{m}_q,...,m_q\} \\
    (qI-A)G_{(q,m_q+\tilde{m}_q)} &= 0, \quad \forall~ q \in \sigma \\
    \sum\nolimits_{p \in \p - \sigma} G_p + \sum\nolimits_{q \in \sigma} G_{(q,1)} &= I
    \end{split} \label{eq:spa_sls}
  \end{align}
  and the impulse responses
  \begin{align}
    \begin{split}
    \mathcal{I}(\Phi_u)(k) &= \sum_{p \in \p} p^{k-1} H_p \\
    \mathcal{I}(\Phi_x)(k) &= \sum_{p \in \p-\sigma} p^{k-1} G_p
    + \sum_{q \in \sigma} \sum_{i=1}^{m_q+1} p^{k-i} {k-1 \choose i-1} G_{(q,i)}.
    \end{split}
    \label{eq:spa_imp}    
  \end{align}
  It is straighforward to see that in \eqref{eq:spa_sls}-\eqref{eq:spa_imp}
  the SLS constraints
  and the impulse responses are affine and linear,
  respectively, in the coefficients $H_p$, $G_p$, and $G_{(q,i)}$.
  As the impulse responses $\mathcal{I}$ and convolution operators
  $\mathcal{C}$ appearing in the objective \eqref{eq:spa_obj} are linear in
  the impulse responses
  of $\Phi_x$ and $\Phi_u$, this implies that the terms inside the norms
  $||\cdot||_F$ and $||\cdot||_2$ are affine in the coefficients
  $H_p$, $G_p$, and $G_{(q,i)}$.
  Therefore, since $||\cdot||_F$ and $||\cdot||_2$ are convex, the SPA
  control design \eqref{eq:spa_obj}-\eqref{eq:spa_imp} is convex.

  As representations of $\mathcal{I}$ and $\mathcal{C}$ would require matrices
  of infinite size, in order to evaluate the norms $||\cdot||_F$ and
  $||\cdot||_2$ in the objective \eqref{eq:spa_obj} in practice, we introduce
  a finite $T > 0$ and
  replace all instances of $\mathcal{I}$ and $\mathcal{C}$ in \eqref{eq:spa_obj}
  by $\mathcal{I}_T$ and $\mathcal{C}_T$, respectively
  (see \cite{Fi22a} for the notation).
  Then these norms become the standard Frobenius and spectral matrix norms,
  so \eqref{eq:spa_obj}-\eqref{eq:spa_imp} can be formulated as a tractable
  semidefinite program (SDP), and as a quadratic program (QP) in the special
  case of $\mathcal{H}_2$ design (i.e., $\lambda = 0$).
  As the dimension of \eqref{eq:spa_obj}-\eqref{eq:spa_imp} is independent of
  the time horizon $T$, in practice one can take $T$ sufficiently large such
  that the Frobenius and spectral norms in the objective approximate
  arbitrarily well the true $\mathcal{H}_2$ and $\mathcal{H}_\infty$ norms,
  respectively.

  Since the uncontrollable poles of the plant are included in $\Phi_x(z)$,
  feasibility is ensured whenever $(A,B)$ is stabilizable.
  As feasibility is guaranteed in this case and
  \eqref{eq:spa_obj}-\eqref{eq:spa_imp} is convex,
  unlike with DBC there is no
  need to introduce a slack
  variable or use iterative unimodal optimization methods
  for SPA.
  Instead, SPA can be solved with a single convex optimization (a SDP or QP),
  and then the closed-loop responses are given by 
  $T_{v \to x}(z) = \Phi_x(z)$ and $T_{v \to u}(z) = \Phi_u(z)$,
  which do not require inverting transfer functions as in
  Section~\ref{sec:review} for DBC.
  Furthermore, note that the poles in $\mathcal{P}$ can be chosen to lie
  anywhere within the open unit disk, so this method does not result in FIR
  closed-loop transfer functions and, hence, avoids deadbeat control.
  
  
\subsection{Suboptimality Bounds}\label{sec:subopt}

Recall that $d(z,\p)$ is the distance from $z$ to $\p$, and that
$D(\p) = \max_{z \in \d} d(z,\p)$ measures the worst case geometric
approximation error
between approximating poles $\p$ and optimal poles $\mathcal{Q}$.
In addition, $r \in (0,1)$ is such that $\p \subset \overline{B}_r$,
and $\delta$ is a measure of the minimum distance between each approximating
pole in $\p$ and $\sigma$ (see \cite{Fi22a} for further details).
Also, recall Assumptions A1-A5 from Part I \cite{Fi22a}.
Our main theoretical result shows that the relative error of the SPA method
decays at least linearly with $D(\mathcal{P})$.

\begin{theorem}[General Suboptimality Bound]\label{thm:gen}
Let $J^*$ denote the optimal cost of \eqref{eq:spa_inf}, and let
$J(\p)$ denote the optimal cost of
\eqref{eq:spa_obj}-\eqref{eq:spa_imp} for any choice of $\p$.
Suppose Assumption A6 is met, and $\p$ satisfies Assumptions A1-A5.
Then there exists a constant
$\hat{K} = \hat{K}(\mathcal{Q},G_{(q,j)}^*,H_{(q,j)}^*,r,\delta) > 0$
such that
\begin{align}
  \f{J(\mathcal{P}) - J^*}{J^*} \leq \hat{K} D(\mathcal{P}).
  \label{eq:spa_bound}
\end{align}
\end{theorem}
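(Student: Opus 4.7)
The plan is to construct a feasible point of the SPA problem \eqref{eq:spa_obj}-\eqref{eq:spa_imp} that is close to the optimal solution $(\Phi_x^*,\Phi_u^*)$ of \eqref{eq:spa_inf}, and then estimate the resulting suboptimality gap using the approximation results from Part~I. By convexity of \eqref{eq:spa_obj}-\eqref{eq:spa_imp}, any such feasible point yields an upper bound on $J(\mathcal{P})$.

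The construction proceeds in two steps. First, I would build $\hat{\Phi}_u$ of the form in \eqref{eq:spa_approx} by applying the SPA from Part~I term-by-term to the partial fraction decomposition \eqref{eq:phi_opt} of $\Phi_u^*$: for each summand $H_{(q,j)}^*/(z-q)^j$ (with $q \in \mathcal{Q}$ and $1 \le j \le m_q^*$), I invoke the Part~I approximation by simple poles located in $\mathcal{P}$ to produce an approximant $\sum_{p \in \mathcal{P}} h_{(q,j),p} \, 1/(z-p)$. Summing yields coefficients $H_p$. Second, I define $\hat{\Phi}_x$ by enforcing the SLS equality constraint, i.e.\ $\hat{\Phi}_x(z) = (zI-A)^{-1}(I + B\hat{\Phi}_u(z))$. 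Expanding $(zI-A)^{-1}$ in partial fractions over $\sigma$ (with multiplicities up to $m_q$) and multiplying by $\hat{\Phi}_u$ (whose poles are simple and lie in $\mathcal{P}$) shows that $\hat{\Phi}_x$ admits the decomposition required by \eqref{eq:spa_approx}: poles in $\mathcal{P}-\sigma$ are simple, while each plant pole $q \in \sigma$ appears with multiplicity $m_q+\tilde{m}_q \le m_q+1$, matching the SPA ansatz. By construction $(\hat{\Phi}_x,\hat{\Phi}_u)$ automatically satisfies the affine constraints \eqref{eq:spa_sls} (which are exactly the residue-matching conditions derived in the proof of Lemma~\ref{lem:bound} for the SLS constraint), so it is feasible for \eqref{eq:spa_obj}-\eqref{eq:spa_imp}.

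Next I would bound the objective gap. By the triangle inequality and the fact that $\Phi_u^*,\Phi_x^*$ achieve $J^*$ in \eqref{eq:spa_inf}, the cost at $(\hat{\Phi}_x,\hat{\Phi}_u)$ is at most
\begin{align*}
J^* &+ \bigl\|C(\Phi_x^*-\hat{\Phi}_x)\hat{B} + D(\Phi_u^*-\hat{\Phi}_u)\hat{B}\bigr\|_{\mathcal{H}_2} \\
&+ \lambda \bigl\|C(\Phi_x^*-\hat{\Phi}_x)\hat{B} + D(\Phi_u^*-\hat{\Phi}_u)\hat{B}\bigr\|_{\mathcal{H}_\infty}.
\end{align*}
From the SLS identity, $\Phi_x^* - \hat{\Phi}_x = (zI-A)^{-1}B(\Phi_u^* - \hat{\Phi}_u)$, so it suffices to control $\|\Phi_u^* - \hat{\Phi}_u\|$ in the Hardy norms and multiply by the $\h$-norm of $(zI-A)^{-1}B$ (finite since poles of $A$ lie in $\overline{\mathbb{D}}$, and boundedness near the unit circle is taken care of by the stabilizable part, combined with the fact that uncontrollable poles cancel in the constructed $\hat{\Phi}_x$). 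The term-wise error in $\Phi_u^*-\hat{\Phi}_u$ is controlled by the Part~I bound for approximating $1/(z-q)^j$ by simple poles on $\mathcal{P}$, which is linear in $D(\mathcal{P})$ with a constant depending on $|H_{(q,j)}^*|$, the distance from $q$ to the unit circle (absorbed into $r$ via Assumption A1), and the separation $\delta$ between $\mathcal{P}$ and $\sigma$. Summing over $q \in \mathcal{Q}$ and $1 \le j \le m_q^*$ produces a constant of the claimed form $\hat{K}(\mathcal{Q},G_{(q,j)}^*,H_{(q,j)}^*,r,\delta)$, and dividing by $J^*$ gives \eqref{eq:spa_bound}.

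The main obstacle I expect is twofold. First, one must verify that $\hat{\Phi}_x$ constructed from the SLS identity truly fits the SPA template \eqref{eq:spa_approx} without exceeding the allowed multiplicity $m_q+1$ at plant poles, which requires a careful partial-fraction bookkeeping that exploits simplicity of $\hat{\Phi}_u$'s poles; this is exactly the content implicit in Lemma~\ref{lem:bound}. Second, the Part~I approximation results give rates for \emph{simple} pole targets, so extending them cleanly to the higher-multiplicity targets $1/(z-q)^j$ requires either differentiating the simple-pole bound with respect to $q$ (which introduces $\delta$-dependent constants due to derivatives of $1/(z-p)$ at nearby $p$) or invoking a tailored variant from Part~I; making the dependence on $\delta$ explicit is the delicate computation.
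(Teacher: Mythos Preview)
Your overall strategy coincides with the paper's: approximate $\Phi_u^*$ via Part~I, set $\hat\Phi_x=(zI-A)^{-1}(I+B\hat\Phi_u)$, verify it fits the SPA ansatz, and bound the objective gap by the triangle inequality. This is exactly the skeleton of Lemma~\ref{lem:bound} together with the paper's short proof of Theorem~\ref{thm:gen}. Where you differ is in bounding $\|\Phi_x^*-\hat\Phi_x\|$: you use the one-line submultiplicative estimate $\|(zI-A)^{-1}B\|_{\mathcal H_\infty}\,\|\Phi_u^*-\hat\Phi_u\|$, whereas the paper devotes essentially all of Section~\ref{sec:proof} (Lemma~\ref{lem:bound}, supported by Lemmas~\ref{lem:abs}--\ref{lem:case2} and Corollary~\ref{cor:hard}) to a direct partial-fraction computation, passing to Jordan form, decoupling into elementary blocks $J(\lambda)$, and for each optimal pole $q$ treating separately the cases $q\ne\lambda$ and $q=\lambda$. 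Your route is far shorter when it applies; the paper's route produces constants in the stated form $\hat K(\mathcal Q,G_{(q,j)}^*,H_{(q,j)}^*,r,\delta)$ rather than through the opaque plant factor $\|(zI-A)^{-1}B\|_{\mathcal H_\infty}$.

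There is, however, a gap in your shortcut. The parenthetical justification is backwards: stabilizability forces \emph{uncontrollable} modes to be stable, but \emph{controllable} modes may well be unstable, and for any such eigenvalue $\lambda$ with $|\lambda|\ge 1$ one has $\|(zI-A)^{-1}B\|_{\mathcal H_\infty}=\infty$, so the submultiplicative bound is vacuous. Worse, nothing in the Part~I construction forces $\hat\Phi_u$ to satisfy the interpolation condition at such $\lambda$ (that $I+B\hat\Phi_u(\lambda)$ lie in the range of $\lambda I-A$) needed for the unstable pole of $(zI-A)^{-1}$ to cancel in $\hat\Phi_x$; without it the constructed $\hat\Phi_x$ is not even in $\f{1}{z}\h$ and hence not feasible. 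Thus your argument as written only covers Schur-stable $A$; beyond that case one cannot avoid a residue-level analysis of $\Phi_x^*-\hat\Phi_x$ in the spirit of Lemma~\ref{lem:bound}, rather than a global operator-norm bound.
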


While the DBC suboptimality bound in \eqref{eq:sls_bound} only holds for
$T$ sufficiently large such that $||\Phi_x^*(T)||_2 \leq C_*\rho^T < 1$,
the SPA bound in \eqref{eq:spa_bound} does not have this requirement.
Furthermore, the DBC bound includes a term $\f{1}{1-C_*\rho_*^T}$ resulting
from the slack variable, whereas the SPA bound has no such term because it
does not need a slack variable.
Finally, the convergence for the DBC bound depends on the rate of decay
of the optimal closed-loop impulse response, whereas the SPA bound convergence
depends on the distance between
$\mathcal{P}$ and the optimal closed-loop poles.
Therefore, SPA is preferable when the optimal impulse response takes long
to decay, such as in stabilizable systems with large separation of time scales.
In addition, if some optimal poles can be included in $\p$ due to prior
knowledge (see Section~\ref{sec:spa}), this will typically
have the effect of decreasing both $D(\p)$ and $\hat{K}$
in \eqref{eq:spa_bound}, significantly reducing the relative error of SPA.
In contrast, it is not clear how such prior knowledge could be included with
DBC to reduce the relative error in \eqref{eq:sls_bound}.



     


Corollary~\ref{cor:spiral} shows that, for the Archimedes spiral pole selection
in \cite{Fi22a},
the relative error of SPA converges to zero at the rate $(|\p|+2)^{-1/2}$
since $|\p_n| = 2n-2$ for each $n > 0$.

\begin{corollary}[Spiral Suboptimality Bound]\label{cor:spiral}
  Consider the setup of Theorem~\ref{thm:gen} with pole selection $\p_n$ given
  as in \cite[Theorem~4]{Fi22a} for each 
  integer $n > 0$.
  Then there exists a constant
  $\hat{K} = \hat{K}(\mathcal{Q},G_{(q,j)}^*,H_{(q,j)}^*) > 0$ and $N > 0$
  such that $n \geq N$ implies
\begin{align}
  \f{J(\p_n) - J^*}{J^*} \leq \f{\hat{K}}{\sqrt{n}}.
\end{align}
\end{corollary}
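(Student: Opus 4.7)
The plan is to apply Theorem~\ref{thm:gen} directly with the specific pole selection $\p_n$ from \cite[Theorem~4]{Fi22a} and then invoke the geometric decay rate of $D(\p_n)$ established there. First, I would verify that for $n$ sufficiently large, say $n \geq N$, the Archimedes spiral selection $\p_n$ satisfies Assumptions A1--A5 from Part~I; this is essentially the content of \cite[Theorem~4]{Fi22a}, where the spiral construction was designed precisely so that its points enjoy good geometric separation and coverage properties on $\d$.

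Second, with A1--A5 in hand, Theorem~\ref{thm:gen} yields
\begin{align*}
\f{J(\p_n) - J^*}{J^*} \leq \hat{K}(\mathcal{Q}, G^*_{(q,j)}, H^*_{(q,j)}, r_n, \delta_n)\, D(\p_n),
\end{align*}
where $r_n$ and $\delta_n$ are the containment radius and the minimum distance to $\sigma$ associated with $\p_n$. The crucial technical point is that $r_n$ and $\delta_n$ can be taken uniformly in $n$: by construction of the spiral, there is an $r < 1$ independent of $n \geq N$ such that $\p_n \subset \overline{B}_r$ (the spiral stays a fixed distance from the unit circle), and, possibly after discarding finitely many points of the spiral that happen to fall near $\sigma$ or after a mild rotation of the spiral to avoid $\sigma$, a uniform $\delta > 0$ with $d(p, \sigma) \geq \delta$ for every $p \in \p_n \setminus \sigma$. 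Since $\hat{K}$ depends on $r$ and $\delta$ only through their fixed values, it can be absorbed into a single constant independent of $n$.

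Third, I would invoke the core geometric estimate of \cite[Theorem~4]{Fi22a}, namely $D(\p_n) \leq C/\sqrt{|\p_n|+2}$ for some $C > 0$. Since $|\p_n| = 2n - 2$, this gives $D(\p_n) \leq C/\sqrt{2n}$. Substituting into the bound from Theorem~\ref{thm:gen} and collecting all constants into a single $\hat{K} = \hat{K}(\mathcal{Q}, G^*_{(q,j)}, H^*_{(q,j)})$ yields the claimed $\hat{K}/\sqrt{n}$ bound.

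The main obstacle I expect is the uniformity argument for $r$ and $\delta$. Uniformity of $r$ is automatic from the spiral's construction, but the minimum-distance parameter $\delta$ requires care: as $n$ grows, spiral points become denser in $\d$ and could in principle cluster arbitrarily close to a plant pole $q \in \sigma$. Resolving this cleanly likely requires either choosing the spiral so that $\sigma$ is deliberately avoided (e.g., by a small rotation or radial perturbation), or observing that the recommendation in Section~\ref{sec:spa} to include $\sigma$ in $\p$ ensures nearby spiral points contribute only to the pole at $q$ itself and can be omitted without changing the rate. Either way, once $r$ and $\delta$ are made uniform in $n \geq N$, the rest is a mechanical composition of Theorem~\ref{thm:gen} with \cite[Theorem~4]{Fi22a}.
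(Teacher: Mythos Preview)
Your proposal is correct and follows essentially the same route as the paper: apply Theorem~\ref{thm:gen} to the spiral selection and then use the $O(n^{-1/2})$ covering estimate from Part~I. The paper's proof is a one-liner that invokes \cite[Theorem~5]{Fi22a} (not Theorem~4) together with its proof; that result in Part~I is precisely what packages the uniformity-in-$n$ of the constants $r$ and $\delta$ that you correctly identify as the only nontrivial point. So where you sketch an ad hoc argument (rotate the spiral, or absorb nearby points into the plant poles), the paper simply cites the companion result where this has already been handled.
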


Note that $N$ in Corollary~\ref{cor:spiral} only needs to be chosen to ensure
that $D(\p_N) < 1$, $|\p_N| \geq m_{\max}$, and that $\delta > 0$ for $\p_N$,
and so is typically satisfied in practice with small $N$
(see the remark following \cite[Theorem~3]{Fi22a}).





\section{Numerical Example}\label{sec:ex}



To compare DBC and SPA, we consider the example of using a
power converter to provide frequency and voltage control services to the power
grid, which arises
naturally as a result of interfacing
renewable generation 
to the grid \cite{Li20}. 
This example served as the motivation to develop the SPA
method, because of the inadequate performance of DBC resulting from the large
separation of time scales in power systems containing power converter
interfaced devices \cite{Su21}.
Let $w$ represent the frequency and voltage magnitude 
at the connection point, and let $y$ represent the power output of the
converter.
Then this can be formulated 
in the form of \eqref{eq:spa_inf} with matrices given by
\begin{align*}
  A &= \left[\begin{smallmatrix}
    0.988 & 0 & 0 & 0 & 0 \\
    0 & 0 & 0 & 0 & 0 \\
    1 & 0 & 0 & 0 & 0 \\
    0 & 0 & 0 & 0.995 & 0 \\
    0 & 0 & 0 & 0 & 0.9\end{smallmatrix}\right] \mkern-5mu, \;
  B = \left[\begin{smallmatrix}
    0 & 0 \\
    0 & 0 \\
    0 & 0 \\
    0.005 & 0 \\
    0 & 0.1\end{smallmatrix}\right] \mkern-5mu, \;
  D =
\left[\begin{smallmatrix}
    0 & 0 \\
    0 & 0 \\
    0.01 & 0 \\
    0 & 0.01\end{smallmatrix}\right] \\
  \hat{B} &= \left[\begin{smallmatrix}
    -0.0001 & 0 \\
    0 & 1 \\
    0.0066 & 0 \\
    0 & 0 \\
    0 & 0\end{smallmatrix}\right]\mkern-5mu, \quad
  C = \left[\begin{smallmatrix}
    0 & 0.829 & -0.428 & 1.02 & 0 \\
    0 & 0.428 & 0.829 & 0 & -1.02 \\
    0 & 0 & 0 & 0 & 0 \\
    0 & 0 & 0 & 0 & 0
    \end{smallmatrix}\right]
  \\
  &T_{\text{desired}}(z) = \left[\begin{smallmatrix}
    T_{\text{desired}}^{\hat{y}}(z) \\
    0
  \end{smallmatrix}\right]
  = \left[\begin{smallmatrix}
     \f{-5.3e-5}{z-0.999} & 0 \\
     0 & \f{-100}{z-0.999} \\
     0 & 0 \\
     0 & 0
  \end{smallmatrix}\right].
\end{align*}
For ease of comparison to the ground-truth optimal solution we choose
$\lambda = 0$,
and we note that an infinite impulse response method for SLS exists for
this special case \cite{Yu21}, but we emphasize that similar results to those
shown here hold for
$\lambda \neq 0$ (although the exact ground-truth optimal solution is difficult
to obtain).
With $\lambda = 0$, the objective is to minimize
\begin{align*}
  \left|\left| \left[\begin{smallmatrix}
    T_{w \to \hat{y}}(z) - T_{\text{desired}}^{\hat{y}}(z) \\
    0.01~T_{w \to u}(z)
  \end{smallmatrix}\right]
  \right|\right|_{\mathcal{H}_2}
\end{align*}
where $\hat{y} = Cx$.
Let $t \in \mathbb{R}$ and $k \in \mathbb{Z}$ denote continuous and
discrete time, respectively, with a sample time
of $h = 1~\text{ms}$ chosen to avoid aliasing from the fast converter dynamics.

To solve \eqref{eq:goal}, for DBC we use golden section search as suggested
in \cite[p. 380]{An19}, which involves
solving SDPs iteratively to find $\Phi_x$ and $\Phi_u$.
Then, $T_{w \to y}$ can be recovered from $\Phi_x$ and $\Phi_u$
by inverting $\left(I + \f{V}{z^T}\right)$ (see Section~\ref{sec:review})
and then applying a linear transformation.
For SPA, we let $\p$ consist of the poles of the plant
and $T_{\text{desired}}$, and select the remaining poles from
the Archimedes spiral as in \cite[Theorem~4]{Fi22a}.
Then, solving \eqref{eq:goal} using SPA only requires a single SDP,
and then $T_{w \to y}$ is given by a linear transformation of $\Phi_x$
and $\Phi_u$, so no transfer function inversion is necessary.
To solve the SDPs in each case, Matlab was used with YALMIP
and the solver MOSEK.  This control design implementation is available online
\cite{git_sls}.



The DBC and SPA control design approaches are run for varying numbers of poles.
For DBC, the problem is infeasible for 30 or less poles, 
converges in 16 (golden section) iterations for 31 poles, and converges in
7 iterations for 300 poles.
Recovering $T_{w \to y}$ from the DBC solution requires inverting a transfer
function, 
but for large numbers
of poles this leads to out of memory errors and numerical errors.
Therefore, the figures show only the result of using $\Phi_x$ and $\Phi_u$
in place of the true system responses $T_{v \to x}$ and $T_{v \to u}$,
so the true DBC results are actually worse than the DBC results shown in these
figures.
The SPA method is feasible for any number of poles, and requires only one SDP
for each number of poles. It is run for 7 and 15 poles, and
the true system responses are easily recovered.

The impulse responses for the solutions of DBC, SPA, and the optimal and desired
transfer functions are shown in Fig.~\ref{fig:h2_zoom}.
For DBC, the impulse response is close to the optimal impulse response only
for the first $31~\text{ms}$ or $300~\text{ms}$ for 31 and 300 poles,
respectively, after which the impulse response becomes zero (an undesirable but
inevitable feature of DBC).
However, the optimal impulse response takes several seconds to decay, so
overall the matching is very poor for DBC, with 300 poles only slightly better
than with 31 poles.
In contrast, for SPA the impulse response shows a large initial mismatch during
the first few milliseconds, but after this the matching is much closer, with
the 15 pole solution showing significantly better matching than the 7 pole case.
Note that the large initial impulse responses of the SPA method could be
reduced with additional convex constraints or by adding fast poles.
From the impulse responses it is clear that SPA is much closer to the optimal
solution, and with orders of magnitude fewer poles.


\begin{figure}[t]
  \centering
  \includegraphics[width=0.49\textwidth]{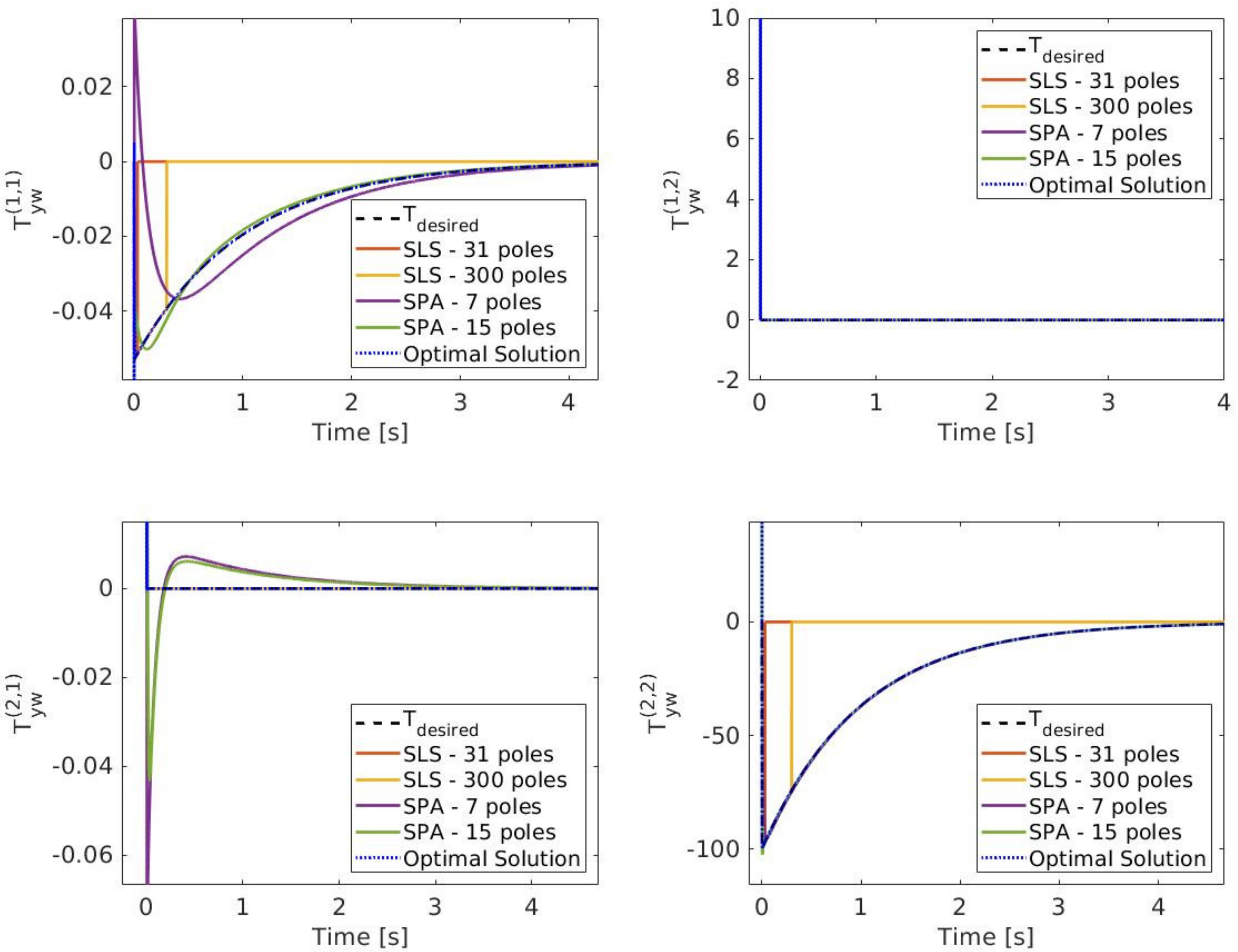}
  \caption{The impulse responses of control designs for
    SLS with the FIR approximation (DBC) and simple pole approximation (SPA)
    as a
    function of the number of poles in the controller.  The desired
    transfer function $T_{\text{desired}}$
    and the ground-truth optimal solution
    are also shown.}
  \label{fig:h2_zoom}
\end{figure}

The step responses for the solutions of DBC, SPA, and the optimal and desired
transfer functions are shown in Fig.~\ref{fig:h2_step}.
For DBC, the step responses deviate greatly from the optimal step response,
with the 300 pole solution closer than with 31 poles.
With SPA the step responses are close to the optimal step response, with
the 15 pole solution showing closer matching during the initial transient
than the 7 pole case.
However, even with SPA there is a small steady state error.
This could be removed either by imposing convex constraints on the
DC gain directly, or by changing the objective 
to the difference in the step responses (which is also convex). 
From the step responses it is clear that SPA results in much closer matching
with the optimal transfer function than DBC, and with far fewer poles.





\begin{figure}
  \centering
  \includegraphics[width=0.49\textwidth]{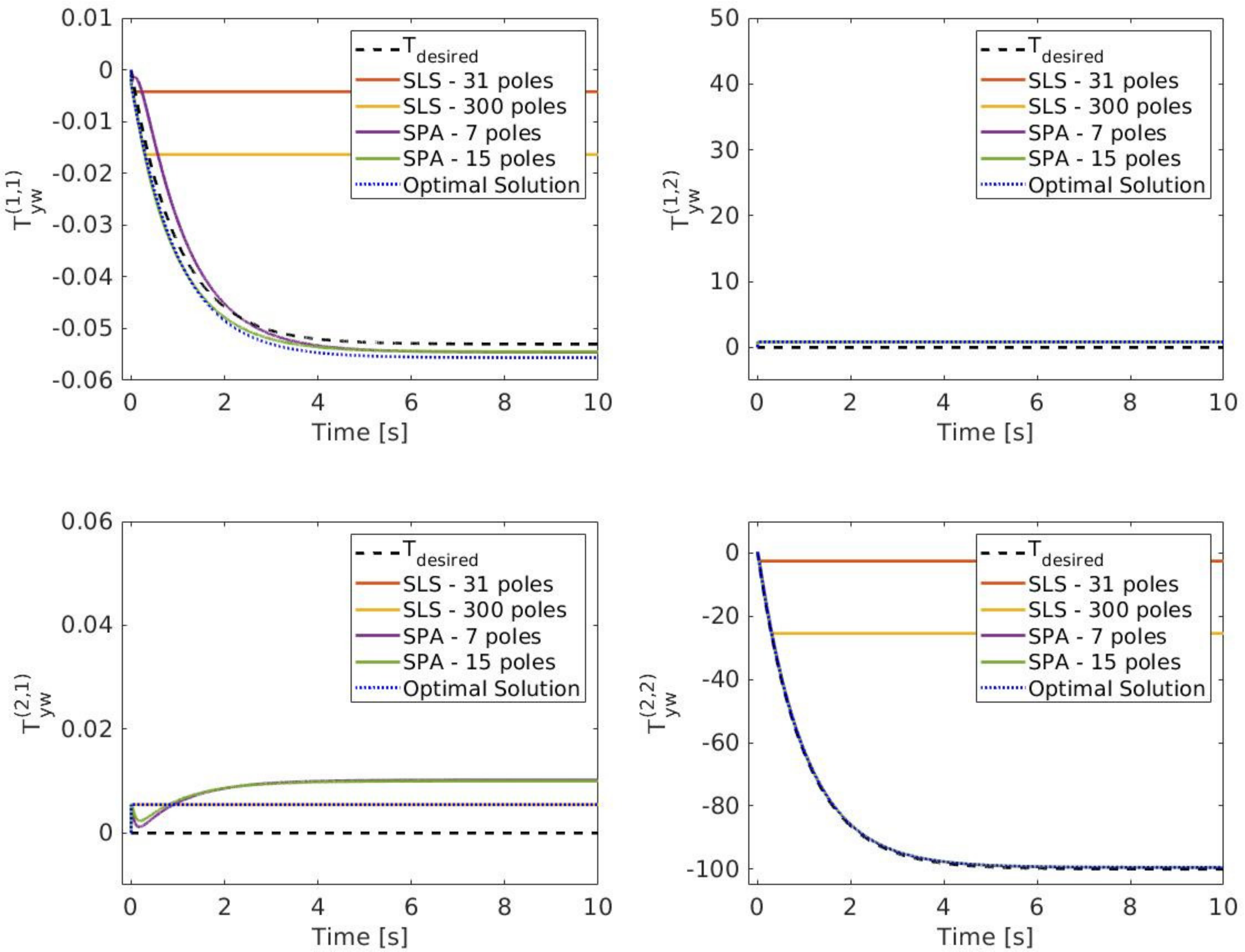}
  \caption{Step responses of control designs for SLS with the FIR approximation
    (DBC) and simple pole approximation (SPA) as a function of the number of
    poles in the controller.  The desired transfer function
      $T_{\text{desired}}$ and the ground-truth optimal solution are also shown.}
  \label{fig:h2_step}
\end{figure}

\section{Proofs}\label{sec:proof}

The key technical result required to prove Theorem~\ref{thm:gen}
is Lemma~\ref{lem:bound}, which extends the approximation error bounds of
\cite[Theorem~1]{Fi22a} to bound the error between a particular feasible
solution $(\Phi_u, \Phi_x)$ of \eqref{eq:spa_obj}-\eqref{eq:spa_imp} and the
optimal solution $(\Phi_u^*, \Phi_x^*)$ of \eqref{eq:spa_inf}.

\begin{lemma}\label{lem:bound}
  Under the conditions of Theorem~\ref{thm:gen}, let
  $(\Phi_x^*,\Phi_u^*)$ denote the optimal solution to \eqref{eq:spa_inf}.
  Then there exist $\Phi_x, \Phi_u \in \f{1}{z} \h$ which are a feasible
  solution to \eqref{eq:spa_obj}-\eqref{eq:spa_imp}, and constants
  $K^u_\infty, K^u_2, K^x_\infty, K^x_2 > 0$, such that
  \begin{align}
    ||\Phi_u-\Phi_u^*||_{\mathcal{H}_\infty} &\leq K^u_\infty D(\p), \quad
    ||\Phi_u-\Phi_u^*||_{\mathcal{H}_2} \leq K^u_2 D(\p) \label{eq:ubound} \\    
    ||\Phi_x-\Phi_x^*||_{\mathcal{H}_\infty} &\leq K^x_\infty D(\p), \quad
    ||\Phi_x-\Phi_x^*||_{\mathcal{H}_2} \leq K^x_2 D(\p). \label{eq:xbound}
  \end{align}
\end{lemma}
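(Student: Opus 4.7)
The plan is to construct an explicit feasible pair $(\Phi_u,\Phi_x)$ for \eqref{eq:spa_obj}-\eqref{eq:spa_imp} by approximating each rational factor in the partial fraction decomposition \eqref{eq:phi_opt} of $\Phi_u^*$ via the simple pole approximation of Part~I, and then inheriting $\Phi_x$ from the SLS equality constraint. First, I would dispatch the auxiliary claim $\hat{\mathcal{Q}} \subset \mathcal{Q} \cup \sigma$ (flagged in the problem setup as being proved here): the SLS identity $(zI-A)\Phi_x^* - B\Phi_u^* = I$ gives $\Phi_x^* = (zI-A)^{-1}(I + B\Phi_u^*)$, whose poles can only lie in $\sigma \cup \mathcal{Q}$. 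Next, I would apply \cite[Theorem~1]{Fi22a} to each factor $1/(z-q)^j$ appearing in \eqref{eq:phi_opt} to obtain coefficients $\alpha_{p,(q,j)}$ satisfying
\[
\left\|\frac{1}{(z-q)^j} - \sum_{p \in \mathcal{P}}\frac{\alpha_{p,(q,j)}}{z-p}\right\|_{\mathcal{H}_\infty,\,\mathcal{H}_2} \leq C_{q,j}\, D(\mathcal{P})
\]
with $C_{q,j}$ depending on $r$, $\delta$, and the geometry of $q$. Setting $H_p := \sum_{q,j} H_{(q,j)}^*\,\alpha_{p,(q,j)}$ and $\Phi_u := \sum_{p \in \mathcal{P}} H_p/(z-p)$, the triangle inequality over the finite sum in \eqref{eq:phi_opt} yields \eqref{eq:ubound}. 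If $A$ has unstable eigenvalues, I would additionally correct $\Phi_u$ by a small $\Delta\Phi_u \in \mathrm{span}\{1/(z-p) : p\in\mathcal{P}\}$ of magnitude $O(D(\mathcal{P}))$ so that $\Phi_u$ satisfies the interpolation conditions $\ell^\intercal B \Phi_u(q) = -\ell^\intercal$ (one per left-null vector $\ell$ of $qI-A$ at each unstable $q \in \sigma$) that $\Phi_u^*$ already satisfies exactly; this correction solves a finite linear system whose conditioning is controlled by $\delta$ and the stabilizability of $(A,B)$, and does not inflate the approximation bound beyond a constant factor.

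Having fixed $\Phi_u$, I would set $\Phi_x := (zI-A)^{-1}(I + B\Phi_u)$. By the interpolation conditions the residues at unstable plant poles vanish, so $\Phi_x \in \frac{1}{z}\mathcal{RH}_\infty$. Its partial fraction expansion has simple poles at $\mathcal{P}\setminus\sigma$ and poles of multiplicity at most $m_q + \tilde{m}_q$ at each $q \in \sigma$ (the extra unit arising precisely when $q \in \mathcal{P}$ contributes an additional simple factor through $\Phi_u$), which matches the form \eqref{eq:spa_approx}. The SLS relations \eqref{eq:spa_sls} then drop out by equating partial fraction coefficients on both sides of $(zI-A)\Phi_x - B\Phi_u = I$, so $(\Phi_u,\Phi_x)$ is feasible. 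To establish \eqref{eq:xbound}, I would start from
\[
\Phi_x - \Phi_x^* = (zI-A)^{-1} B(\Phi_u - \Phi_u^*)
\]
and exploit that both $\Phi_x$ and $\Phi_x^*$ are stable to cancel the unstable modes of $(zI-A)^{-1}$. A clean way to carry this out is to introduce a stabilizing gain $K_0$ with $A_c := A + BK_0$ Schur, rewrite the difference using $(zI-A_c)^{-1} \in \mathcal{RH}_\infty$, and solve for $\Phi_x - \Phi_x^*$ to obtain $\|\Phi_x-\Phi_x^*\|_{\mathcal{H}_\infty,\,\mathcal{H}_2} \leq C(A,B,K_0)\|\Phi_u - \Phi_u^*\|_{\mathcal{H}_\infty,\,\mathcal{H}_2}$, which combined with \eqref{eq:ubound} yields \eqref{eq:xbound}.

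The hard part will be the interpolation-correction step: showing that the SPA of $\Phi_u^*$ can be perturbed within $\mathrm{span}\{1/(z-p) : p\in\mathcal{P}\}$ by $O(D(\mathcal{P}))$ so as to exactly satisfy the finitely many algebraic constraints at unstable plant poles. Solvability and conditioning of that small linear system rely on Assumptions A1-A5 from Part~I, in particular the minimum separation $\delta$ between $\mathcal{P}$ and $\sigma$, which lower-bounds the reachable changes in $\Phi_u(q)$ induced by perturbations of the $H_p$; stabilizability of $(A,B)$ ensures the right-hand side is consistent. A secondary technical point is carefully propagating the error constants through the rational manipulations to exhibit the explicit dependence of $K^u_\infty, K^u_2, K^x_\infty, K^x_2$ on $(\mathcal{Q},G^*_{(q,j)},H^*_{(q,j)},r,\delta)$ promised in Theorem~\ref{thm:gen}.
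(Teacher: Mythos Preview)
Your overall strategy---construct $\Phi_u$ via \cite[Theorem~1]{Fi22a} and recover $\Phi_x$ from the SLS identity---matches the paper, as does the pole-multiplicity bookkeeping that shows $\Phi_x$ has the form \eqref{eq:spa_approx}. Where you diverge is in bounding $\|\Phi_x-\Phi_x^*\|$: the paper does not use $\Phi_x-\Phi_x^*=(zI-A)^{-1}B(\Phi_u-\Phi_u^*)$ together with a resolvent bound on the unit circle. Instead it passes to Jordan normal form, decouples block by block, and for each elementary block runs a pole-by-pole case analysis ($q\neq\lambda$ versus $q=\lambda$) supported by several bespoke lemmas on rational identities and inverses of Jordan blocks. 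Your route is far more economical and yields the same conclusion provided $A$ has no eigenvalues on the unit circle; the paper's route exposes the constants' dependence on the pole geometry in explicit detail but at considerable technical cost.

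Two cautions. First, the $K_0$ device you sketch does not close: rewriting via $(zI-A_c)^{-1}$ leaves $\Phi_x-\Phi_x^*$ on both sides, and algebraically unwinding the implicit equation just returns $(zI-A)^{-1}B(\Phi_u-\Phi_u^*)$; the actual workhorse is the boundedness of $(zI-A)^{-1}$ on $\partial\mathbb{D}$, which you should state as the needed hypothesis rather than hide behind a stabilizing gain. Second, the interpolation correction you flag as the hard part has no analogue in the paper: the paper applies \cite[Theorem~1]{Fi22a} once and sets $\Phi_x=(zI-A)^{-1}(I+B\Phi_u)$ directly, without perturbing $\Phi_u$, relying on the pole-multiplicity argument to place all poles of $\Phi_x$ in $\sigma\cup\mathcal{P}$. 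For Schur $A$ your correction is unnecessary; for genuinely unstable $A$ your correction would indeed be needed to make your resolvent argument go through, but the $O(D(\mathcal{P}))$ solvability claim rests on conditioning estimates you have not supplied.
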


Before proving Lemma~\ref{lem:bound}, we will require several technical
results as given in the next few lemmas and corollaries.

\begin{lemma}\label{lem:abs}
Let $k$ be any integer, $m$ a positive integer, and $z \in \dc$.
Let $q, p_1,...,p_m \in \d$,
and let $\hat{d}(q) = \max_i |p_i-q|$.
If $z \in \c$, let $\delta = d(\c,\{p_i\}_{i=1}^m) > 0$ and
$\eta = d(q,\c) > 0$; if not, suppose
$d\left(z,\{p_i\}_{i=1}^m\right) \geq \delta > 0$
and $d(z,q) \geq \eta > 0$.
Then there exists $K > 0$ such that
\begin{align*}
  \left|\f{(z-q)^k}{\prod_{i=1}^m (z-p_i)} - (z-q)^{k-m} \right|
  &\leq K\hat{d}(q).
\end{align*}
\end{lemma}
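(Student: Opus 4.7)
The plan is to combine the two terms into a single rational expression and then extract the factor $\hat{d}(q)$ by a straightforward polynomial expansion. First, I would rewrite
\[
  \frac{(z-q)^k}{\prod_{i=1}^m (z-p_i)} - (z-q)^{k-m}
  = \frac{(z-q)^{k-m}\bigl[(z-q)^m - \prod_{i=1}^m (z-p_i)\bigr]}{\prod_{i=1}^m (z-p_i)},
\]
which reduces the claim to (a) upper bounding the bracketed polynomial difference, (b) lower bounding the denominator, and (c) upper bounding the prefactor $|z-q|^{k-m}$.

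For part (a), I would set $u = z - q$ and $\epsilon_i = q - p_i$, so that $z - p_i = u + \epsilon_i$ and $|\epsilon_i| \le \hat{d}(q)$. Expanding the product gives
\[
  \prod_{i=1}^m (u + \epsilon_i) - u^m = \sum_{\emptyset \neq S \subseteq \{1,\dots,m\}} u^{\,m-|S|} \prod_{i \in S} \epsilon_i,
\]
where every summand contains at least one factor of $\epsilon_i$. Using the a priori bounds $|u| \le |z|+|q| \le 2$ and $|\epsilon_i| \le 2$, I can pull out a single $\hat{d}(q)$ and control the remaining sum by a constant depending only on $m$, yielding $|(z-q)^m - \prod_i(z-p_i)| \le C_m \hat{d}(q)$.

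For part (b), in the boundary case $z \in \partial\mathbb{D}$ the hypothesis $\delta = d(\partial\mathbb{D}, \{p_i\}) > 0$ gives $|z-p_i| \ge \delta$, while in the non-boundary case this is assumed directly, so $\bigl|\prod_i(z-p_i)\bigr| \ge \delta^m$ uniformly. For part (c), I would split on the sign of $k-m$: if $k \ge m$ then $|z-q|^{k-m} \le 2^{k-m}$, and if $k < m$ then the hypothesis $|z-q| \ge \eta$ (either from $\eta = d(q, \partial\mathbb{D})$ on the boundary or from the direct assumption) gives $|z-q|^{k-m} \le \eta^{k-m}$. Combining the three bounds yields the desired inequality with an explicit $K$ depending on $m, k, \delta, \eta$.

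The main obstacle is purely bookkeeping: handling the two case splits ($z \in \partial\mathbb{D}$ vs.\ interior, and $k \ge m$ vs.\ $k < m$) so that the lower bound on the denominator and the upper bound on the prefactor apply uniformly across $z \in \overline{\mathbb{D}}$. There is no deep analytic content; the essential fact is simply that $\prod_i(u+\epsilon_i) - u^m$ vanishes when every $\epsilon_i = 0$, so it is controlled linearly by $\max_i |\epsilon_i| = \hat{d}(q)$.
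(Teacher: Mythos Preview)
Your proposal is correct and follows essentially the same route as the paper: combine into a single fraction, bound the polynomial difference $(z-q)^m - \prod_i(z-p_i)$ linearly in $\hat d(q)$, and lower bound the denominator using $\delta$ and $\eta$. The only cosmetic differences are that the paper cites the companion Part~I for the polynomial bound (obtaining the explicit constant $(|q|+2)^m-(|q|+1)^m$) rather than expanding from scratch, and that the paper folds the factor $(z-q)^{m-k}$ into the denominator instead of splitting on the sign of $k-m$ as you do; your explicit case split on $k\ge m$ versus $k<m$ is in fact the cleaner way to handle the prefactor.
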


\begin{proof}[Proof of Lemma~\ref{lem:abs}]
Let $\p = \{p_1,...,p_m\}$. We compute
\begin{align*}
  \left|\f{(z-q)^k}{\prod\limits_{i=1}^m (z-p_i)} - (z-q)^{k-m} \right| 
  = \f{\left|(z-q)^m - \prod\limits_{i=1}^m (z-p_i)\right|}
    {\left|(z-q)^{m-k}\prod\limits_{i=1}^m (z-p_i)\right|}.
\end{align*}
Noting that the proofs of \cite[Eqs. 10, 12]{Fi22a} are still valid for
$z \in \dc$ (i.e. $|z| \leq 1$), applying them here we have that
\begin{align*}
  \left|(z-q)^m - \prod_{i=1}^m (z-p_i)\right|
  \leq \left((|q|+2)^m - (|q|+1)^m\right) \hat{d}(q).
\end{align*}
Furthermore,
\begin{align*}
  \left|(z-q)^{m-k}\prod_{i=1}^m (z-p_i)\right|
  \geq d(z,q)^{m-k} d(z,\mathcal{P})^m
  \geq \eta^{m-k} \delta^m.  
\end{align*}
Combining these two inequalities implies that
\begin{align*}
  \left|\f{(z-q)^k}{\prod\limits_{i=1}^m (z-p_i)} - (z-q)^{k-m} \right|
  &\leq K \hat{d}(q) \\
  K = \f{\left((|q|+2)^m - (|q|+1)^m\right)}{\eta^{m-k} \delta^m}.
\end{align*}
\end{proof}

\begin{lemma}\label{lem:easy}
  Let $k$ be a nonnegative integer, $m$ a positive integer
  and $q, p_1, ..., p_m \in \d$.
  Choose constants $c_{p_i}$ as in the proof of \cite[Lemma~3]{Fi22a}.
  Then
  \begin{enumerate}[label=\alph*.]
  \item For $k < m$
    \begin{align*}
      \sum_{i=1}^m (p_i-q)^k c_{p_i} \f{1}{z-p_i} =
      \f{(z-q)^k}{\prod\limits_{i=1}^m (z-p_i)}.
    \end{align*}
  \item For $k \geq m$
    \begin{align*}
      \sum_{i=1}^m (p_i-q)^k c_{p_i} \f{1}{z-p_i}
      &= \f{(z-q)^k}{\prod\limits_{i=1}^m (z-p_i)}
      - \sum_{i=0}^{k-m} b_i (z-q)^i \\
      b_i &= \sum_{j=1}^m (p_j-q)^{k-1-i} c_{p_j}.
    \end{align*}
  \end{enumerate}
\end{lemma}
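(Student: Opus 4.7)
The plan is to recognize that Lemma~\ref{lem:easy} is essentially a partial fraction decomposition statement. First I need to recall that the constants $c_{p_i}$ from \cite[Lemma~3]{Fi22a} are (up to sign convention) the Lagrange-type coefficients $c_{p_i} = 1/\prod_{j\neq i}(p_i - p_j)$, which arise from the partial fraction decomposition $\f{1}{\prod_{i=1}^m(z-p_i)} = \sum_{i=1}^m c_{p_i} \f{1}{z-p_i}$ when the $p_i$ are distinct (the generic case; the proof extends by continuity or by expanding the definition of $c_{p_i}$ to handle coincidences).

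For part (a), where $k<m$, the rational function $(z-q)^k/\prod_{i=1}^m(z-p_i)$ is proper with only simple poles at $p_1,\dots,p_m$. A direct residue calculation gives
\begin{align*}
\mathrm{Res}_{z=p_i} \f{(z-q)^k}{\prod_{j=1}^m(z-p_j)} = \f{(p_i-q)^k}{\prod_{j\neq i}(p_i-p_j)} = (p_i-q)^k c_{p_i},
\end{align*}
so the partial fraction expansion yields exactly the claimed identity. This step is straightforward.

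For part (b), where $k\geq m$, I would reduce to part (a) via the geometric-series identity
\begin{align*}
(p_j-q)^{k-m+1} - (z-q)^{k-m+1} = (p_j - z)\sum_{i=0}^{k-m}(p_j-q)^{k-m-i}(z-q)^i.
\end{align*}
Multiplying through by $(p_j-q)^{m-1}$, dividing by $p_j - z$, summing over $j$ against $c_{p_j}$, and swapping orders of summation, the left-hand side becomes $\sum_{i=0}^{k-m} b_i (z-q)^i$ after identifying the coefficients with the stated $b_i$. The resulting right-hand side splits into two pieces: one piece is exactly $-\sum_{j} (p_j-q)^k c_{p_j}/(z-p_j)$, and the other is $(z-q)^{k-m+1}\sum_j c_{p_j}(p_j-q)^{m-1}/(z-p_j)$, which by part (a) applied with exponent $m-1<m$ collapses to $(z-q)^k/\prod_i(z-p_i)$. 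Rearranging yields the statement of part (b).

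The main obstacle is bookkeeping: keeping indices aligned in the double sum and correctly invoking part (a) with the shifted exponent $m-1$. Everything else is mechanical algebra, so the proof should be short once the two ingredients — the residue formula for (a) and the geometric-series telescoping for (b) — are in place.
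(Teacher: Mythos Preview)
Your proposal is correct. Part~(a) matches the paper exactly: both write the partial fraction decomposition of $(z-q)^k/\prod_j(z-p_j)$ and identify the residues as $(p_i-q)^k c_{p_i}$.

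For part~(b) your route differs from the paper's. The paper postulates the decomposition
\[
\frac{(z-q)^k}{\prod_{i=1}^m (z-p_i)} = \sum_{i=1}^m \kappa_i \frac{1}{z-p_i} + \sum_{i=0}^{k-m} b_i (z-q)^i,
\]
recovers $\kappa_i=(p_i-q)^k c_{p_i}$ as in~(a), and then obtains each $b_i$ by differentiating $i$ times and evaluating at $z=q$. Your argument instead starts from the finite geometric identity $a^{n+1}-b^{n+1}=(a-b)\sum_{i=0}^n a^{n-i}b^i$ with $a=p_j-q$, $b=z-q$, $n=k-m$, multiplies by $(p_j-q)^{m-1}c_{p_j}/(p_j-z)$, sums over $j$, and closes the loop by invoking part~(a) with exponent $m-1$. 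Both are short and rigorous; the paper's differentiation trick is perhaps the more direct ``read off the coefficients'' maneuver, while your telescoping derivation is purely algebraic, avoids calculus entirely, and makes the appearance of the $b_i$ formula more transparent since it falls out of the geometric sum rather than from a derivative computation.
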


\begin{proof}[Proof of Lemma~\ref{lem:easy}]
First consider Case (a).
Write the partial fraction decomposition
\begin{align*}
  \f{(z-q)^k}{\prod_{i=1}^m (z-p_i)} = \sum\nolimits_{i=1}^m \kappa_i \f{1}{z-p_i}.
\end{align*}
Multiplying both sides by $\prod_{i=1}^m (z-p_i)$ and evaluating at $z = p_i$
implies that
\begin{align*}
  \kappa_i = \f{(p_i-q)^k}{\prod_{\substack{j = 1 \\ j \neq i}}^m (p_i-p_j)}
  = (p_i-q)^k c_{p_i}
\end{align*}
which completes the proof for Case (a).

Next consider Case (b).
Write the partial fraction decomposition
\begin{align}
  \f{(z-q)^k}{\prod_{i=1}^m (z-p_i)} = \sum\nolimits_{i=1}^m \kappa_i \f{1}{z-p_i}
  + \sum\nolimits_{i=0}^{k-m} b_i (z-q)^i.
  \label{eq:zqk}
\end{align}
Multiplying by $\prod_{i=1}^m (z-p_i)$ and evaluating at $z = p_i$ implies
\begin{align*}
  \kappa_i = \f{(p_i-q)^k}{\prod_{\substack{j=1 \\ j \neq i}}^m (p_i-p_j)}
  = (p_i-q)^k c_{p_i}.
\end{align*}
Differentiating \eqref{eq:zqk} $i$ times with respect to $z$ and evaluating at
$z = q$ implies that
\begin{align*}
  0 = -i! \sum\nolimits_{j=1}^m (p_j-q)^{k-1-i} c_{p_j} + i! b_i
\end{align*}
for $i \in \{0,...,k-m\}$, so
\begin{align*}
  b_i  = \sum\nolimits_{j=1}^m (p_j-q)^{k-1-i} c_{p_j}.
\end{align*}

\end{proof}

Let $m$ and $n$ be integers, and
define the rising factorial $m^{(n)} = \prod_{k=0}^{n-1} (m+k)$
and the falling factorial $m_n = \prod_{k=0}^{n-1} (m-k)$.
Note that for $m$ and $n$ nonnegative,
letting $m!$ denote the standard factorial, we have
$m^{(n)} = \f{(m+n-1)!}{(m-1)!}$ and $m_n = \f{m!}{(m-n)!}$.
It is straightforward to verify: \\
Fact 3.  $(-1)^{n} m^{(n)} = (-m)_n$ \\
Fact 4. $\sum_{j=0}^n {n \choose j} m_j (m')_{n-j} = (m+m')_n$.


\begin{lemma}\label{lem:hard}
  Let $k$ and $m$ be positive integers, $z \in \c$, 
  and $q, \lambda, p_1, ..., p_m \in \d$ with
  $d(\lambda,\{p_i\}_{i=1}^m) \geq \delta > 0$ and $d(\lambda,q) \geq \eta > 0$.
  Choose constants $c_{p_i}$ as in the proof of \cite[Lemma~3]{Fi22a}. Then
  \begin{enumerate}[label=\alph*.]
  \item  There exists $K > 0$ such that
\begin{align*}
  \sum_{i=1}^m c_{p_i} \f{(\lambda-p_i)^{-k}}{z-p_i}
  &= \f{(\lambda-z)^{-k}}{\prod\limits_{i=1}^m (z-p_i)}
  - \f{r(z)}{(\lambda-z)^k} \\
  r(z) &= \sum_{n=0}^{k-1} a_n (\lambda-z)^n
\end{align*}
and
\begin{align*}
  &\lim_{z \to \lambda} \f{d}{dz^l} \left(r(z) \prod\limits_{i=1}^m (z-p_i)\right) \\
  &= \begin{cases}
    -1, & l = 0 \\
    0, & l \in \{1,...,k-1\} \\
    \left(\f{(-1)^{l+1}m^{(l)}}{(\lambda-q)^{m+l}} + \epsilon\right)
    \prod\limits_{i=1}^m (\lambda-p_i),  & l = k
  \end{cases} \\
  &|\epsilon| \leq K \hat{d}(q).
\end{align*}
\item There exist $K_0',...,K_{k-1}' > 0$ such that
\begin{align*}
  &\sum_{i=1}^m c_{p_i} \f{(p_i-q)^m}{(\lambda-p_i)^{k}} \f{1}{z-p_i}
  = \f{(z-q)^m}{(\lambda-z)^{k}\prod\limits_{i=1}^m (z-p_i)} \\
  & \mkern+200mu \;
  - \sum_{n=0}^{k-1} \f{a_n}{(\lambda-z)^{k-n}} \\
  &|a_0-1| \leq K_0' \hat{d}(q), \;
  |a_n| \leq K_n' \hat{d}(q), \; n \in \{1,...,k-1\}.
\end{align*}
  \end{enumerate}
\end{lemma}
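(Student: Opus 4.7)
The plan is to derive both claims from partial fraction decompositions of suitable rational functions, then read off the coefficients via Taylor expansion at $z=\lambda$ and estimate the residuals using Lemma~\ref{lem:abs}.

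For Part~(a), I would recognize the claimed identity as (up to rearrangement) the partial fraction decomposition of $\frac{1}{(\lambda-z)^k P(z)}$, with $P(z)=\prod_{i=1}^m(z-p_i)$. A direct residue calculation shows the residue at each simple pole $z=p_i$ equals $c_{p_i}/(\lambda-p_i)^k$, while the principal part at the order-$k$ pole $z=\lambda$ has the form $r(z)/(\lambda-z)^k$ with $r$ a polynomial of degree less than $k$ in $(\lambda-z)$. Multiplying through by $(\lambda-z)^k P(z)$ and rearranging yields
\begin{align*}
r(z) P(z) = 1 - (\lambda-z)^k P(z) \sum_{i=1}^m \frac{c_{p_i}/(\lambda-p_i)^k}{z-p_i}.
\end{align*}
Applying Leibniz's rule to the second term, every summand of derivative order $l<k$ retains a positive power of $(\lambda-z)$ and therefore vanishes at $z=\lambda$; hence the $l$-th derivatives of $r(z)P(z)$ match those of the constant term, giving the claimed pattern for $l=0,1,\dots,k-1$. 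For $l=k$, only the term in which $(\lambda-z)^k$ is differentiated $k$ times survives, producing $(-1)^k k!\,P(\lambda)\sum_i c_{p_i}/(\lambda-p_i)^{k+1}$.

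To reach the advertised leading term at $l=k$, I would use the identity $\sum_i c_{p_i}/(z-p_i)=1/P(z)$ (Lemma~\ref{lem:easy}(a) with exponent $0$) and differentiate $k$ times in $z$, which rewrites the sum as $\frac{(-1)^k}{k!}\frac{d^k}{dz^k}\bigl[1/P(z)\bigr]_{z=\lambda}$. When all $p_i$ collapse to $q$, $P(z)=(z-q)^m$ and $\frac{d^k}{dz^k}\bigl[(z-q)^{-m}\bigr]_{z=\lambda}=(-1)^k m^{(k)}(\lambda-q)^{-m-k}$, yielding exactly the claimed leading coefficient. The main obstacle is therefore the uniform derivative bound $\bigl|\frac{d^k}{dz^k}\bigl[1/P(z)-1/(z-q)^m\bigr]_{z=\lambda}\bigr|\le K\hat{d}(q)$. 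I would obtain it by writing $1/P(z)-1/(z-q)^m=[(z-q)^m-P(z)]/[(z-q)^m P(z)]$, using the telescoping estimate $|P(z)-(z-q)^m|\le C\hat{d}(q)$ already present in Lemma~\ref{lem:abs} (and \cite[Eqs.~10,~12]{Fi22a}), and transferring the pointwise bound to the $k$-th derivative at $\lambda$ via Cauchy's integral formula on a small circle around $\lambda$ whose radius is kept positive by $\eta$ and $\delta$.

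Part~(b) follows by the same template. The relevant partial fraction decomposition is now that of $(z-q)^m/[(\lambda-z)^k P(z)]$: the residue at each $z=p_i$ is $c_{p_i}(p_i-q)^m/(\lambda-p_i)^k$, and the principal part at $z=\lambda$ is $\sum_{n=0}^{k-1}a_n/(\lambda-z)^{k-n}$, matching the claimed identity. Reading off the $a_n$ as Taylor coefficients at $z=\lambda$ of $r(z)=(\lambda-z)^k\bigl[(z-q)^m/((\lambda-z)^k P(z))-\sum_i c_{p_i}(p_i-q)^m/((\lambda-p_i)^k(z-p_i))\bigr]$, one finds $a_n=\frac{1}{n!}\frac{d^n}{dz^n}\bigl[(z-q)^m/P(z)\bigr]_{z=\lambda}$ for $n<k$, since the sum over $i$ is again annihilated by $(\lambda-z)^k$ at those derivative orders. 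When all $p_i=q$ the quotient $(z-q)^m/P(z)\equiv 1$, giving $a_0=1$ and $a_n=0$ for $n\ge 1$; the required bounds therefore reduce to controlling derivatives of $(z-q)^m/P(z)-1$ at $z=\lambda$, which again follows from Lemma~\ref{lem:abs} combined with the same Cauchy-integral transfer used in Part~(a).
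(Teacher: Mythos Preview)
Your proposal is correct and takes a genuinely different route from the paper. Both begin with the same partial fraction decomposition of $(z-q)^l/[(\lambda-z)^k P(z)]$ (with $l=0$ for (a) and $l=m$ for (b)), identify the residues at $z=p_i$, and extract $r$ from the product identity at $z=\lambda$; the handling of the orders $l<k$ is also essentially identical. Where they diverge is in the estimates for $l=k$ in (a) and for the $a_n$ in (b). The paper sets up the Leibniz recursion $b_n=e_n/d_0-\sum_{j=1}^n\binom{n}{j}(d_j/d_0)\,b_{n-j}$ for $b_n=(-1)^n n!\,a_n$, approximates each $d_n/d_0$ by $m_n/(\lambda-q)^n$ plus an $O(\hat d(q))$ error via Lemma~\ref{lem:abs}, and then runs a strong induction that tracks both the leading term and the error, collapsing the sums with rising/falling factorial identities (Facts~3--4). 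You bypass the recursion entirely: differentiating $\sum_i c_{p_i}/(z-p_i)=1/P(z)$ turns the $l=k$ claim in (a) into a bound on the $k$-th derivative of $1/P(z)-1/(z-q)^m$ at $\lambda$, and in (b) you identify $a_n$ (up to a sign $(-1)^n$, which you dropped but which is immaterial for $|a_n|$) with $\tfrac{1}{n!}\partial_z^n[(z-q)^m/P(z)]_{z=\lambda}$, so the bounds reduce to derivatives of $(z-q)^m/P(z)-1$. In both cases a single application of Lemma~\ref{lem:abs} gives a uniform $O(\hat d(q))$ pointwise bound near $\lambda$, which Cauchy's integral formula transfers to the derivatives. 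Your argument is shorter and sidesteps the factorial combinatorics; the paper's is fully elementary (no contour integration) and yields explicit recursive constants. One small caveat: Lemma~\ref{lem:abs} is stated for $z\in\overline{\mathbb D}$, so your Cauchy contour must have radius below $1-|\lambda|$ as well as below $\min(\delta,\eta)$, adding only a harmless dependence on $1-|\lambda|$ to the constant.
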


\begin{proof}[Proof of Lemma~\ref{lem:hard}]
For $l \in \{0,m\}$, write the partial fraction decomposition
\begin{align}
  \begin{split}
  \f{(z-q)^l}{(\lambda-z)^k \prod_{i=1}^m (z-p_i)}
  &= \sum\nolimits_{i=1}^m \kappa_i \f{1}{z-p_i} + \f{r(z)}{(\lambda-z)^k} \\
  r(z) &= \sum\nolimits_{n=0}^{k-1} a_n (\lambda-z)^n.
  \end{split}
  \label{eq:pfd}
\end{align}
Multiplying both sides by $(\lambda-z)^k \prod_{i=1}^m (z-p_i)$ yields
\begin{align}
  (z-q)^l &=
  (\lambda-z)^k\sum_{i=1}^m \kappa_i \prod_{\substack{j=1 \\ j\neq i}}^m (z-p_j)
  + r(z) \prod_{i=1}^m (z-p_i).
  \label{eq:equal}
\end{align}
Evaluating \eqref{eq:equal} at $z = p_i$ implies that
\begin{align*}
  \kappa_i =  c_{p_i} (p_i-q)^l (\lambda-p_i)^{-k}.
\end{align*}
For $n$ any nonnegative integer, define
\begin{align}
  b_n &= \left(\f{d}{dz^n} r(z) \right)(\lambda) = (-1)^n n! a_n \label{eq:an}\\
  d_n &= \left(\f{d}{dz^n} \prod\nolimits_{i=1}^m (z-p_i) \right)(\lambda)
  = \sum\limits_{\substack{v \in \mathbb{R}^n \\ v_i \in I_m \forall i
      \\ v_i \neq v_j \text{ for } i \neq j}}
  \prod\limits_{\substack{k \in I_m  \\ k \not\in v}} (\lambda-p_k) \\
  e_n &= \left(\f{d}{dz^n} (z-q)^l \right)(\lambda) 
  = l_n (\lambda-q)^{l-n}. \label{eq:en}
\end{align}
Note that
\begin{align}
  \lim_{z \to \lambda} \f{d}{dz^n} \left(r(z)\prod\nolimits_{i=1}^m(z-p_i)\right)
  = \sum\nolimits_{j=0}^n {n \choose j} d_j b_{n-j}
  \label{eq:zl}
\end{align}
for any nonnegative integer $n$.
Differentiating \eqref{eq:equal} $n$ times with respect to
$z$, and evaluating at $z = \lambda$ implies that
\begin{align}
  \begin{split}
  e_n &= \lim_{z \to \lambda} \f{d}{dz^n} \left(r(z)\prod\limits_{i=1}^m(z-p_i)\right) 
  = \sum_{j=0}^n {n \choose j} d_j b_{n-j}
  \end{split}
  \label{eq:dzn}
\end{align}
for $n \in \{0,...,k-1\}$.
Dividing by $d_0$ and solving for $b_n$ implies
\begin{align}
  b_n = \f{e_n}{d_0} - \sum\nolimits_{j=1}^n {n \choose j} \f{d_j}{d_0} b_{n-j}.
    \label{eq:bn}
\end{align}
Note that
\begin{align*}
  \f{d_n}{d_0} =
  \sum\limits_{\substack{v \in \mathbb{R}^n \\ v_i \in I_m \forall i
      \\ v_i \neq v_j \text{ for } i \neq j}}
  \prod\limits_{k \in v} \f{1}{\lambda-p_k}.
\end{align*}
Define
\begin{align*}
  \epsilon_n' &= \f{d_n}{d_0} - \f{m_n}{(\lambda-q)^n} 
  = \mkern-20mu
  \sum\limits_{\substack{v \in \mathbb{R}^n \\ v_i \in I_m \forall i
      \\ v_i \neq v_j \text{ for } i \neq j}}
  \left(\prod\limits_{k \in v} \f{1}{\lambda-p_k} - \f{1}{(\lambda-q)^n}\right)
\end{align*}
since the number of terms in the sum is $m_n$.
Thus, by Lemma~\ref{lem:abs} there exists $k_n' > 0$ such that
\begin{align}
  \begin{split}
  \f{d_n}{d_0} &= m_n \f{1}{(\lambda-q)^n} + \epsilon_n', \quad
  |\epsilon_n'| \leq k_n' \hat{d}(q).
  \end{split}
  \label{eq:dn}
\end{align}
Consider first Case (a): $l = 0$.  Then $e_0 = 1$ and $e_n = 0$ for
$n \in \{1,...,k-1\}$.
By \eqref{eq:dzn}, this implies the desired result for $n \in \{0,...,k-1\}$,
so it suffices to prove the desired result for $n = k$.
By \eqref{eq:bn}, $b_0 = \f{1}{d_0}$.
We claim that there exists $k_n > 0$ such that
\begin{align}
  \begin{split}
 -\sum\nolimits_{j=1}^n {n \choose j} \f{d_j}{d_0} b_{n-j}
 &=  \f{(-1)^n m^{(n)}}{(\lambda-q)^{m+n}} + \epsilon_n, \quad
  |\epsilon_n| \leq k_n \hat{d}(q)
  \end{split}
  \label{eq:bnd}
\end{align}
for all $n \in \{1,...,k\}$.
Note that by \eqref{eq:bn}, this implies that
\begin{align}
  \begin{split}
  b_n &= (-1)^n m^{(n)} \f{1}{(\lambda-q)^{m+n}} + \epsilon_n, \quad
  |\epsilon_n| \leq k_n \hat{d}(q)
  \end{split}
  \label{eq:help}
\end{align}
for $n \in \{1,...,k-1\}$.
We prove \eqref{eq:bnd} by strong induction.
For the base case, first note that
\begin{align}
  \begin{split}
  b_0 &= \f{1}{d_0} = \f{1}{(\lambda-q)^m}
  + \left(\f{1}{\prod_{i=1}^m (\lambda-p_i)} - \f{1}{(\lambda-q)^m} \right) \\
  &= \f{1}{(\lambda-q)^m} + \epsilon_0, \quad
  |\epsilon_0| \leq k_0 \hat{d}(q)
  \end{split}
  \label{eq:b0}
\end{align}
where such $k_0 > 0$ exists by Lemma~\ref{lem:abs}.
Then for $n=1$ we have
\begin{align*}
  - \f{d_1}{d_0} b_0 &= - \left(\f{m}{\lambda-q} + \epsilon_1'\right)
  \left(\f{1}{(\lambda-q)^m} + \epsilon_0\right) \\
  &= - \f{m}{(\lambda-q)^m} -  \f{m}{\lambda-q}\epsilon_0
  -\f{1}{(\lambda-q)^m} \epsilon_1' - \epsilon_0 \epsilon_1' \\
  &= - \f{m}{(\lambda-q)^m} + \epsilon_1, \quad
  |\epsilon_1| \leq k_1 \hat{d}(q) \\
  k_1 &= \f{mk_0}{|\lambda-q|} + \f{k_1'}{|\lambda-q|^m} + k_0k_1'.
\end{align*}
For the induction step, assume that \eqref{eq:bnd} holds for
all $j \in \{1,...,n-1\}$, which, together with \eqref{eq:b0},
implies that \eqref{eq:help} holds for all $j \in \{0,...,n-1\}$.
By \eqref{eq:dn} and \eqref{eq:help} we have
\begin{align*}
  &- \sum_{j=1}^n {n \choose j} \f{d_j}{d_0} b_{n-j}
    = - \sum_{j=1}^n {n \choose j}
      \left(m_j \f{1}{(\lambda-q)^j} + \epsilon_n'\right) \\
      &
      \mkern+80mu *
      \left((-1)^{n-j} m^{(n-j)} \f{1}{(\lambda-q)^{m+n-j}} + \epsilon_{n-j}\right)
      \\
      &= - \sum\nolimits_{j=1}^n {n \choose j} m_j (-1)^{n-j} m^{(n-j)}
      \f{1}{(\lambda-q)^{m+n}}
      + \epsilon_n \\
      &\epsilon_n = - \sum_{j=1}^n {n \choose j} 
      \left(\f{m_j\epsilon_{n-j}}{(\lambda-q)^j} 
      +  \f{(-1)^{n-j} m^{(n-j)}\epsilon_n'}{(\lambda-q)^{m+n-j}}
      + \epsilon_n'\epsilon_{n-j} \right)\\
      &|\epsilon_n| \leq k_n \hat{d}(q) \\
      &k_n = \sum\nolimits_{j=1}^n {n \choose j}
      \left(\f{m_jk_{n-j}}{|\lambda-q|^j}  
      + \f{m^{(n-j)}k_n'}{|\lambda-q|^{m+n-j}} 
      + k_{n-j}k_n'\right).
\end{align*}
So 
\begin{align*}
  &- \sum\nolimits_{j=1}^n {n \choose j} \f{d_j}{d_0} b_{n-j} \\
  &
  \overset{\substack{\text{above} \\ \text{identity}}}{=}  
  - \f{1}{(\lambda-q)^{m+n}}
  \sum\nolimits_{j=1}^n {n \choose j} m_j (-1)^{n-j} m^{(n-j)}
  + \epsilon_n \\
  &
  \overset{\text{Fact } 3}{=}
  - \f{1}{(\lambda-q)^{m+n}}
  \sum\nolimits_{j=1}^n {n \choose j} m_j (-m)_{n-j}
  + \epsilon_n  \\
  &
  \overset{\text{add } 0}{=}
  \f{1}{(\lambda-q)^{m+n}}
  \left((-m)_n - \sum_{j=0}^n {n \choose j} m_j (-m)_{n-j}  \right)
  + \epsilon_n \\
  &
  \overset{\text{Fact } 4}{=}
  \f{1}{(\lambda-q)^{m+n}} \left((-m)_n - (m-m)_{n} \right)
    + \epsilon_n \\
  &
  \overset{0_n = 0}{=}    
    \f{1}{(\lambda-q)^{m+n}} (-m)_n + \epsilon_n \\
  &
  \overset{\text{Fact } 3}{=}
  (-1)^n m^{(n)} \f{1}{(\lambda-q)^{m+n}} + \epsilon_n, \quad
  |\epsilon_n| \leq k_n \hat{d}(q).
\end{align*}
Thus, \eqref{eq:bnd} holds.
Note that $b_k = 0$ since $r(z)$ is a polynomial of order $k-1$.
Therefore, by \eqref{eq:zl} and \eqref{eq:bnd} we have that
\begin{align*}
  &\lim_{z \to \lambda} \f{d}{dz^k} \left(r(z)\prod\nolimits_{i=1}^m(z-p_i)\right)
    = \sum\nolimits_{j=0}^k {k \choose j} d_j b_{k-j} \\
    &= d_0b_k + \sum\nolimits_{j=1}^k {k \choose j} d_j b_{k-j}
    = \sum\nolimits_{j=1}^k {k \choose j} d_j b_{k-j} \\
    &= (-d_0)\left(-\sum\nolimits_{j=1}^k {k \choose j}
    \f{d_j}{d_0} b_{k-j}\right) \\
    &= (-d_0) \left((-1)^k m^{(k)} \f{1}{(\lambda-q)^{m+k}} + \epsilon_k \right),
    \quad |\epsilon_k| \leq k_k \hat{d}(q)
\end{align*}
which yields the result for Case (a).
Next consider Case (b): $l = m$.
Then by \eqref{eq:bn} and \eqref{eq:an}
\begin{align*}
  a_0 = b_0 = \f{e_0}{d_0} = \f{(\lambda-q)^m}
  {\prod\nolimits_{i=1}^m (\lambda-p_i)}
\end{align*}
so
\begin{align*}
  |a_0 - 1| = \left| \f{(\lambda-q)^m}
  {\prod\nolimits_{i=1}^m (\lambda-p_i)} - 1\right|
  \leq k_0 \hat{d}(q)
\end{align*}
where such $k_0 > 0$ exists by Lemma~\ref{lem:abs}.
We claim that there exist $k_n > 0$ such that
\begin{align}
  |b_n| \leq k_n \hat{d}(q)
  \label{eq:bnd2}
\end{align}
for $n \in \{1,...,k-1\}$.
We prove \eqref{eq:bnd2} by strong induction.
For the base case, note that by \eqref{eq:bn} and \eqref{eq:dn}
\begin{align*}
  &b_1 = \f{e_1}{d_0} - \f{d_1}{d_0} b_0 
  = \f{m (\lambda-q)^{m-1}}{d_0} -
  \left(m \f{1}{\lambda-q} + \epsilon_1' \right)
  \f{e_0}{d_0} \\
  &=  \f{m (\lambda-q)^{m-1}}{d_0} - m \f{1}{\lambda-q} \f{(\lambda-q)^m}{d_0}
  - \epsilon_1' \f{(\lambda-q)^m}{\prod\nolimits_{i=1}^m (\lambda-p_i)} \\
  &=  \f{m (\lambda-q)^{m-1}}{d_0} - \f{m(\lambda-q)^{m-1}}{d_0}
  - \epsilon_1' \f{(\lambda-q)^m}{\prod\nolimits_{i=1}^m (\lambda-p_i)} \\
  &= - \epsilon_1' \f{(\lambda-q)^m}{\prod\limits_{i=1}^m (\lambda-p_i)}, \quad
  |b_1| \leq k_1 \hat{d}(q), \quad
  k_1 = k_1' \f{|\lambda-q|^m}{\prod\limits_{i=1}^m |\lambda-p_i|}.
\end{align*}
For the induction step, assume \eqref{eq:bnd2} holds for all
$j \in \{1,...,n-1\}$.
By \eqref{eq:bn}, \eqref{eq:dn}, \eqref{eq:en}, and the induction hypothesis
we have
\begin{align*}
  b_n &
  \overset{\eqref{eq:bn}}{=}
  \f{e_n}{d_0} - \sum\nolimits_{j=1}^n {n \choose j} \f{d_j}{d_0} b_{n-j} \\
  &
  \overset{\substack{\text{regrouping} \\ \text{terms}}}{=}    
  \f{e_n}{d_0} - \f{d_n}{d_0} b_0
  - \sum\nolimits_{j=1}^{n-1} {n \choose j} \f{d_j}{d_0} b_{n-j} \\
  &
  \overset{\substack{\eqref{eq:en} \\ \eqref{eq:dn}}}{=}    
  \f{m_n (\lambda-q)^{m-n}}{d_0}
  - \left(m_n \f{1}{(\lambda-q)^n} + \epsilon_n'\right)
  \f{e_0}{d_0} \\
  &- \sum\nolimits_{j=1}^{n-1} {n \choose j} \left(m_j \f{1}{(\lambda-q)^j}
  + \epsilon_j' \right) b_{n-j} \\
  &
  \overset{\substack{\eqref{eq:en} \\ \text{regrouping}}}{=}    
  \f{m_n (\lambda-q)^{m-n}}{d_0}
  -  \f{m_n (\lambda-q)^{m-n}}{d_0} \\
  &-  \f{\epsilon_n'(\lambda-q)^m}{\prod\limits_{i=1}^m (\lambda-p_i)}
  - \sum\nolimits_{j=1}^{n-1} {n \choose j} \left(m_j (\lambda-q)^{-j}
  + \epsilon_j' \right) b_{n-j} \\
  &
  \overset{\text{cancel}}{=}      
  -  \f{\epsilon_n'(\lambda-q)^m}{\prod\limits_{i=1}^m (\lambda-p_i)}
  - \sum_{j=1}^{n-1} {n \choose j} \left(m_j (\lambda-q)^{-j} + \epsilon_j'
  \right) b_{n-j} \\
  |b_n| &\leq k_n \hat{d}(q) \\
  k_n &= k_n' \f{|\lambda-q|^m}{\prod\limits_{i=1}^m |\lambda-p_i|}
  +\sum_{j=1}^{n-1} {n \choose j} \left(m_j|\lambda-q|^{-j} + k_j' \right) k_{n-j}.
\end{align*}
Thus, \eqref{eq:bnd2} holds.
Combining \eqref{eq:bnd2} with \eqref{eq:pfd} and \eqref{eq:an} yields the
result for Case (b).
\end{proof}

For $z \in \mathbb{C}$, let $J(z)$ denote an elementary Jordan
  block with eigenvalue $z$ whose dimension can be inferred from context.

\begin{corollary}\label{cor:hard}
  Let $k$ be a nonnegative integer, $m$ a positive integer,  $z \in \c$, 
  and $q, \lambda, p_1, ..., p_m \in \d$ with
  $d(\lambda,\{p_i\}_{i=1}^m) \geq \delta > 0$ and $d(\lambda,q) \geq \eta > 0$.
  Choose constants $c_{p_i}$ as in the proof of \cite[Lemma~3]{Fi22a}.  Then
  \begin{enumerate}[label=\alph*.]
    \item There exists $K > 0$ such that
\begin{align*}
  \sum_{i=1}^m \f{c_{p_i}}{(\lambda-p_i)^{(k+1)}}
  &= {m-1 + k \choose k} (\lambda-q)^{-(m+k)} + \epsilon \\
  |\epsilon| \leq K\hat{d}(q).
\end{align*}
\item There exists $K > 0$ such that
\begin{align*}
  \left|\left| \sum_{i=1}^m c_{p_i} (p_i-q)^m J(\lambda-p_i)^{-1} \f{1}{z-p_i}
  \right|\right|_2 \leq K\hat{d}(q).
\end{align*}
  \end{enumerate}
\end{corollary}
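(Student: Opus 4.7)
My plan is to obtain both conclusions as direct consequences of Lemma~\ref{lem:hard}, combined with elementary partial-fraction manipulations that recast the sums in the corollary as coefficients already controlled by that lemma.

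For part (a), I would apply Lemma~\ref{lem:hard}(a) with its parameter $k$ replaced by $k+1$. The identity asserted there reads
\begin{align*}
\sum_i c_{p_i} \frac{(\lambda-p_i)^{-(k+1)}}{z-p_i} = \frac{(\lambda-z)^{-(k+1)}}{\prod_i (z-p_i)} - \frac{r(z)}{(\lambda-z)^{k+1}},
\end{align*}
with $r(z) = \sum_{n=0}^{k} a_n (\lambda-z)^n$. Although stated for $z \in \partial\mathbb{D}$, this is an algebraic identity valid for all $z \in \mathbb{C}$ away from the poles, so I can multiply both sides by $z$ and let $z \to \infty$. The left side tends to $\sum_i c_{p_i}/(\lambda-p_i)^{k+1}$, the first term on the right vanishes (numerator degree $1$ versus denominator degree $m+k+1 \geq 2$), and a leading-order calculation gives $-zr(z)/(\lambda-z)^{k+1} \to a_k$. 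Hence $\sum_i c_{p_i}/(\lambda-p_i)^{k+1} = a_k$. Using the relation $a_n = b_n/((-1)^n n!)$ from \eqref{eq:an} together with estimate \eqref{eq:help} at $n = k$ (which lies in the stated range after the parameter shift), I obtain $a_k = m^{(k)}/(k!(\lambda-q)^{m+k}) + O(\hat d(q))$. The identity $m^{(k)}/k! = \binom{m-1+k}{k}$ finishes the argument, and the degenerate case $k = 0$ follows analogously from \eqref{eq:b0}.

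For part (b), I would exploit the nilpotent decomposition of the Jordan inverse: for a block of size $d$, $J(\mu)^{-1} = \sum_{j=0}^{d-1} (-1)^j \mu^{-(j+1)} S^j$, where $S$ is the fixed nilpotent shift matrix. Substituting this decomposes the matrix of interest into a finite linear combination, over $j$, of the constant matrices $S^j$ weighted by the scalar sums
\begin{align*}
(-1)^j \sum_i c_{p_i} (p_i-q)^m \frac{1}{(\lambda-p_i)^{j+1}} \frac{1}{z-p_i}.
\end{align*}
Applying Lemma~\ref{lem:hard}(b) with its parameter set to $j+1$ rewrites each such scalar as $(z-q)^m/((\lambda-z)^{j+1} \prod_i (z-p_i)) - \sum_{n=0}^{j} a_n/(\lambda-z)^{j+1-n}$. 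I then combine the $a_0$ term with the first fraction and use Lemma~\ref{lem:abs} (with $k=m$, in the $z \in \partial\mathbb{D}$ case) to bound $|(z-q)^m/\prod_i(z-p_i) - 1| \leq K \hat d(q)$, together with the bounds $|a_0 - 1|, |a_n| \leq K_n' \hat d(q)$ supplied by Lemma~\ref{lem:hard}(b), to conclude that every scalar coefficient is $O(\hat d(q))$ uniformly over $z \in \partial\mathbb{D}$. Since $d$ is a fixed finite dimension, the spectral norm of the resulting finite linear combination is at most a constant times this scalar bound.

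The main technical subtlety is the index bookkeeping in part (a): after the parameter of Lemma~\ref{lem:hard}(a) is shifted from $k$ to $k+1$, the value $n = k$ must fall within the validity range of estimate \eqref{eq:help}, which it does precisely because that range is $\{0, \ldots, k_{\text{lemma}} - 1\}$. Once this is verified, both conclusions reduce to routine algebra on top of the scalar identities already established in Lemma~\ref{lem:hard}.
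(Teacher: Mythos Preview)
Your proposal is correct. Part (b) follows essentially the same route as the paper: the nilpotent expansion $J(\mu)^{-1}=\sum_j(-1)^j\mu^{-(j+1)}S^j$ is exactly the superdiagonal decomposition the paper uses (its ``Fact~1'' plays the role of your finite-dimensional norm estimate), and the scalar bound is obtained in both cases from Lemma~\ref{lem:hard}(b) combined with Lemma~\ref{lem:abs}.

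Part (a) is where you diverge. The paper applies Lemma~\ref{lem:hard}(a) with parameter $k$ and evaluates the sum by letting $z\to\lambda$: both numerator and denominator of the resulting fraction vanish to order $k$, so L'Hospital is applied $k$ times, and the $l=k$ derivative clause of Lemma~\ref{lem:hard}(a) (with its explicit $\epsilon$ bound) supplies the answer. Your route instead shifts the lemma parameter to $k+1$ and sends $z\to\infty$, which isolates the top coefficient $a_k$ of $r(z)$; the estimate then comes from \eqref{eq:help} (together with \eqref{eq:b0} for $k=0$) rather than from the derivative clause. Your approach is more economical---it avoids L'Hospital entirely and needs only the already-proved bound on $b_k$---while the paper's approach exploits the full derivative information packaged in Lemma~\ref{lem:hard}(a). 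Both draw on the same inductive estimate \eqref{eq:bnd}, just accessed at different places. Your index check that $n=k$ lies in the range of \eqref{eq:help} after the shift is exactly the point that makes this shortcut work.
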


\begin{proof}[Proof of Corollary~\ref{cor:hard}]
First we prove Case (a).
By Lemma~\ref{lem:hard}(a) we have
\begin{align*}
  \sum\nolimits_{i=1}^m c_{p_i} (\lambda-p_i)^{-(k+1)}
  &= \lim_{z \to \lambda} \sum\nolimits_{i=1}^m c_{p_i} (\lambda-p_i)^{-k}
  \f{1}{z-p_i} \\
  &= \lim_{z \to \lambda}
  \f{1- r(z) \prod\nolimits_{i=1}^m (z-p_i)}
    {(\lambda-z)^k \prod\nolimits_{i=1}^m (z-p_i)}.
\end{align*}
Furthermore, by Lemma~\ref{lem:hard}(a), the numerator and denominator have
the limits
\begin{align*}
  \lim_{z \to \lambda} 1- r(z)\prod\limits_{i=1}^m (z-p_i) = 0, \;
  \lim_{z \to \lambda} (\lambda-z)^k \prod\limits_{i=1}^m (z-p_i) = 0.
\end{align*}
As both the numerator and denominator approach zero as $z \to \lambda$, we
can evaluate the limit using L'Hospital's rule.
For any $l \in \{1,...,k-1\}$, by Lemma~\ref{lem:hard}(a),
differentiating the numerator and denominator
$l$ times and taking the limit as $z \to \lambda$ implies
\begin{align*}
  &\lim_{z \to \lambda} \f{d}{dz^l}
  \left(1- r(z) \prod\nolimits_{i=1}^m (z-p_i)\right) \\
  &= -\lim_{z \to \lambda} \f{d}{dz^l}
  \left(r(z) \prod\nolimits_{i=1}^m (z-p_i)\right)
  = 0 \\
  &\lim_{z \to \lambda} \f{d}{dz^l}
  \left((\lambda-z)^k \prod\nolimits_{i=1}^m (z-p_i)\right) = 0.
\end{align*}
Therefore, we apply L'Hospital's rule $k$ times and use
Lemma~\ref{lem:hard}(a) to obtain
\begin{align*}
  &\sum_{i=1}^m c_{p_i} (\lambda-p_i)^{-(k+1)}
  =
  \f{\lim_{z \to \lambda}\f{d}{dz^k} \left(1- r(z) \prod\limits_{i=1}^m (z-p_i)
    \right)}
 {\lim_{z \to \lambda} \f{d}{dz^k}\left((\lambda-z)^k \prod\limits_{i=1}^m (z-p_i)
      \right)}
 \\
 &= \f{-\left((-1)^{k+1} m^{(k)} (\lambda-q)^{-(m+k)} + \epsilon'\right)
   \prod\limits_{i=1}^m (\lambda-p_i)}
 {(-1)^k k! \prod\limits_{i=1}^m (\lambda-p_i)} \\
 &= \f{m^{(k)}}{k!} (\lambda-q)^{-(m+k)} - (-1)^{-k} \f{1}{k!} \epsilon' \\
 &=  {m+k-1 \choose k} (\lambda-q)^{-(m+k)} + \epsilon, \quad
 |\epsilon| \leq K\hat{d}(q).
\end{align*}
This proves Case (a).

For Case (b), we first recall the following fact, which is straightforward
to verify: \\
Fact 1. If there exist $k_{i,j}$ and $d$ positive such that
$|M_{i,j}| \leq k_{i,j} d$ for all $i$, $j$ then there exists $K > 0$
such that $||M||_2 \leq K d$.

by Fact 1 it suffices to show that for each
$l \in \{0,...,m_q-1\}$ there exists $k_l > 0$ such that the $l$th
superdiagonal of the matrix in the desired result satisfies
\begin{align*}
  \left|
  \sum\nolimits_{i=1}^m c_{p_i} (p_i-q)^m (-1)^l (\lambda-p_i)^{-(l+1)} \f{1}{z-p_i}
  \right|
  \leq k_l \hat{d}(q).
\end{align*}
By Lemma~\ref{lem:hard}(b) and since $z \in \c$,
\begin{align*}
  &\left|
  \sum\nolimits_{i=1}^m c_{p_i} (p_i-q)^m (-1)^l (\lambda-p_i)^{-(l+1)} \f{1}{z-p_i}
  \right| \\
  &
  \overset{\substack{\text{absolute} \\ \text{value}}}{=}
  \left|
  \sum\nolimits_{i=1}^m c_{p_i} (p_i-q)^m (\lambda-p_i)^{-(l+1)} \f{1}{z-p_i}
  \right| \\
  &
  \overset{\text{Lemma}~\ref{lem:hard}\text{(b)}}{=}
  \left|
  \f{(z-q)^m}{(\lambda-z)^k \prod\limits_{i=1}^m (z-p_i)}
  - \sum_{n=0}^{k-1} a_n (\lambda-z)^{n-k}
  \right| \\
  & 
  \overset{\substack{\text{triange} \\ \text{inequality}}}{\leq}
  \left|
  \f{(z-q)^m - a_0 \prod\limits_{i=1}^m (z-p_i)}{(\lambda-z)^k \prod\limits_{i=1}^m (z-p_i)}
  \right|
  + \sum_{n=1}^{k-1} |a_n| |\lambda-z|^{n-k} \\
  & 
  \overset{\substack{a_0 = 1 \\+ (a_0-1)}}{\leq}    
  \left|
  \f{(z-q)^m - \prod\limits_{i=1}^m (z-p_i) + (1-a_0) \prod\limits_{i=1}^m (z-p_i)}
    {(\lambda-z)^k \prod\limits_{i=1}^m (z-p_i)}
  \right| \\
  &+ \sum\nolimits_{n=1}^{k-1} |a_n| (1-|\lambda|)^{n-k} \\
  &
  \overset{\substack{\text{triange} \\ \text{inequality}}}{\leq}  
    \left| \f{(z-q)^m}{\prod\limits_{i=1}^m (z-p_i)} - 1 \right|  |\lambda-z|^{-k}
  + |1-a_0| |\lambda-z|^{-k} \\
  &+ \sum\nolimits_{n=1}^{k-1} |a_n| (1-|\lambda|)^{n-k} \\
  &
  \overset{z \in \c}{\leq}
  \left| \f{(z-q)^m}{\prod\limits_{i=1}^m (z-p_i)} - 1 \right| (1-|\lambda|)^{-k}
  + |1-a_0| (1-|\lambda|)^{-k} \\
  &+ \sum\nolimits_{n=1}^{k-1} |a_n| (1-|\lambda|)^{n-k}
  \leq K \hat{d}(q) \\
  K &= K'(1-|\lambda|)^{-k} + K_0'(1-|\lambda|)^{-k}
  + \sum_{n=1}^{k-1} K_n' (1-|\lambda|)^{n-k}
\end{align*}
where such $K'>0$ exists by Lemma~\ref{lem:abs}.
This proves Case (b).
\end{proof}

\begin{lemma}\label{lem:case2}
  Let $\tilde{m} \in \{0,1\}$, let $m_q, m > 0$ be integers, and let $q \in \d$.
  Suppose that for each $i \in \{1,...,m\}$ we have matrices $G_i^*$ and
  $H_i^*$, and for each $j \in \{1,...,i\}$ we have matrices $H^i_j$ and
  $G^i_j$, and poles $p^i_j$, such that the following hold:
  \begin{align}
    H^i_j = c^i_j H_i^*, \quad G^i_j = -J(q-p^i_j)^{-1}BH^i_j.
    \label{eq:hgij}
  \end{align}
  Then
  \begin{align}
  &\sum_{i=1+\tilde{m}}^m \sum_{j=1+\tilde{m}}^i G^i_j \f{1}{z-p^i_j} 
  \nonumber \\
  &+ \tilde{m} \sum_{l=2}^{m_q+1}  J(0)^{l-2} \sum_{i=1}^m BH^i_1 \f{1}{(z-q)^l}
  \nonumber \\
  &- \sum_{l=1}^{m_q} J(0)^{l-1} \sum_{i=1+\tilde{m}}^m \sum_{j=1+\tilde{m}}^i
  G^i_j \f{1}{(z-q)^l} \nonumber \\
  &= \left( \sum_{l=0}^{m_q-1} J(0)^l \tilde{m} \f{1}{(z-q)^{l+2}} \right) BH_1^*
  \nonumber \\
  &+ \sum_{i=1+\tilde{m}}^m \left(
  \sum_{j=1+\tilde{m}}^i
  J(q-p^i_j)^{-1} \f{-c^i_j}{z-p^i_j} \f{(p^i_j-q)^{m_q}}{(z-q)^{m_q}} \right.
  \nonumber \\
  &+ \sum_{l=0}^{m_q-i-1}  \sum_{k=0}^{m_q-i-1-l} J(0)^{l}
  \sum_{j=1+\tilde{m}}^i \f{c^i_j  (p^i_j-q)^{m_q-2-k-l}}{(z-q)^{m_q-k}} \nonumber \\
  &\left. \vphantom{\sum_{l=1}^2}
  + \tilde{m} c^i_1 J(0)^{m_q-1} \f{1}{(z-q)^{m_q+1}} \right) BH_i^*.
    \label{eq:big_identity}
  \end{align}
  Furthermore, for each $i \in \{1,...,m_q\}$ and $l \in \{0,...,m_q-1\}$,
  there exists $K_{i,l} > 0$ such that
  each element in the $l$th superdiagonal of the term multiplying $BH_i^*$
  in \eqref{eq:big_identity}
  has a difference from $\f{1}{(z-q)^{i+l+1}}$ bounded in absolute value by
  $K_{i,l}D(\p)$.
\end{lemma}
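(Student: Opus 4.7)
My plan is to establish \eqref{eq:big_identity} by direct substitution and matching coefficients, and then to derive the superdiagonal bound using Corollary~\ref{cor:hard}. First, I would substitute $H^i_j = c^i_j H_i^*$ and $G^i_j = -c^i_j J(q-p^i_j)^{-1} B H_i^*$ into the left-hand side so that both sides become linear combinations of the matrices $\{BH_i^*\}_{i=1}^m$. It then suffices to match the coefficient of each $BH_i^*$ separately. I would split into two cases: the generic case $i \geq 1+\tilde{m}$ (which receives contributions from the first and third LHS sums only), and the special case $i=1$ when $\tilde{m}=1$ (which also picks up the middle $\tilde{m}$-dependent sum on the LHS, since $q \in \p$ forces $j=1$ to correspond to the pole at $q$).

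The next step is to evaluate each matrix entry on the $l$th superdiagonal. Because $J(q-p^i_j)^{-1}$ is upper triangular with entries $(-1)^l(q-p^i_j)^{-(l+1)}$ on its $l$th superdiagonal, and $J(0)^{l-1}$ acts as an upward shift by $l-1$ positions, the coefficient of $BH_i^*$ on each superdiagonal reduces to a scalar sum of the form $\sum_j c^i_j (q-p^i_j)^{-(l+1)} \f{1}{z-p^i_j}$, together with analogous sums with $\f{1}{(z-q)^k}$ in place of $\f{1}{z-p^i_j}$. Multiplying and dividing by $(p^i_j-q)^{m_q}$ puts these in exactly the form treated in Lemma~\ref{lem:hard}(b). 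Applying that lemma yields a closed-form expression whose principal part is $\f{(z-q)^{m_q}}{(\lambda-z)^{l+1}\prod_j(z-p^i_j)}$ plus lower-order corrections; regrouping these corrections by power of $(z-q)^{-1}$ and by Jordan power $J(0)^l$ reproduces the triple sum on the RHS of \eqref{eq:big_identity}.

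For the superdiagonal bound I would apply Corollary~\ref{cor:hard}(b) entrywise to the coefficient of $BH_i^*$ on the RHS, falling back on Corollary~\ref{cor:hard}(a) whenever a scalar Jordan-free sum appears. These corollaries guarantee that the relevant Jordan-block sums differ from their limiting expressions $\binom{m-1+k}{k}(\lambda-q)^{-(m+k)}$ by at most $K\hat{d}(q) \leq K D(\p)$, so that each entry on the $l$th superdiagonal agrees with $\f{1}{(z-q)^{i+l+1}}$ up to an error bounded by $K_{i,l} D(\p)$, as desired. I expect the main obstacle to be the combinatorial bookkeeping: correctly aligning the indices of the triple sum $\sum_{l=0}^{m_q-i-1}\sum_{k=0}^{m_q-i-1-l}\cdots$ against the Jordan powers $J(0)^l$, tracking the alternating signs from the Jordan inverse, and handling the $\tilde{m}=1$ branch so that the pole $q \in \p$ contributes only the extra $BH_1^*$ term on the RHS without being double-counted by the first LHS sum.
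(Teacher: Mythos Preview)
Your proposal has a genuine gap: you are invoking the wrong auxiliary lemmas. Lemma~\ref{lem:case2} is designed for Case~2 in the proof of Lemma~\ref{lem:bound}, where the Jordan eigenvalue coincides with the target pole; this is why the nilpotent block $J(0)$ appears throughout and why the relation reads $G^i_j = -J(q-p^i_j)^{-1}BH^i_j$. Both Lemma~\ref{lem:hard} and Corollary~\ref{cor:hard} carry the hypothesis $d(\lambda,q)\ge\eta>0$, i.e.\ $\lambda\neq q$, and their constants blow up as $\lambda\to q$. In particular the ``limiting expression'' $\binom{m-1+k}{k}(\lambda-q)^{-(m+k)}$ you quote from Corollary~\ref{cor:hard}(a) is undefined here, and the principal part $\frac{(z-q)^{m_q}}{(\lambda-z)^{l+1}\prod_j(z-p^i_j)}$ you extract from Lemma~\ref{lem:hard}(b) involves a parameter $\lambda$ that does not exist in the statement of Lemma~\ref{lem:case2}. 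Those results are the tools for Case~1; they cannot be used here.

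There is also a structural issue with proving the identity superdiagonal-by-superdiagonal from the left-hand side. After your substitution, the $l$th superdiagonal of the coefficient of $BH_i^*$ in the first LHS sum is $\sum_{j} c^i_j (p^i_j-q)^{-(l+1)}\frac{1}{z-p^i_j}$, with \emph{negative} powers of $p^i_j-q$; neither Lemma~\ref{lem:easy} nor Lemma~\ref{lem:hard} (with $\lambda=q$) handles this. The paper avoids this by working at the matrix level first: writing $J(0)=J(q-p^i_j)+(p^i_j-q)I$ gives $J(0)^k G^i_j = -\sum_{l=0}^{k-1} c^i_j J(0)^{k-1-l}(p^i_j-q)^l BH_i^* - c^i_j(p^i_j-q)^k J(q-p^i_j)^{-1}BH_i^*$, and then a telescoping computation, iterating $\frac{1}{z-p^i_j}-\frac{1}{z-q}=\frac{p^i_j-q}{(z-q)(z-p^i_j)}$, converts the combination of the first and third LHS sums into the RHS term carrying the factor $\frac{(p^i_j-q)^{m_q}}{(z-q)^{m_q}}$. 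Only \emph{after} this identity is established does one pass to superdiagonals of the RHS, where the exponents of $(p^i_j-q)$ are now nonnegative; Lemma~\ref{lem:easy}(a),(b) then reduce each superdiagonal to $\frac{1}{\prod_j(z-p^i_j)}\frac{1}{(z-q)^{l+1+\tilde m}}$, and Lemma~\ref{lem:abs} gives the $K_{i,l}D(\mathcal P)$ bound. Your plan is missing both the telescoping step and the correct choice of lemmas (Lemmas~\ref{lem:abs} and~\ref{lem:easy} rather than Lemma~\ref{lem:hard} and Corollary~\ref{cor:hard}).
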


\begin{proof}[Proof of Lemma~\ref{lem:case2}]
We begin by proving \eqref{eq:big_identity}.
For any $i \in \{1,...m\}$, $j \in \{1,...,i\}$,
and $k \in \{1,...,m_q-1\}$ we
have that $J(0)G^i_j = -c^i_jJ(0)J(q-p^i_j)^{-1}BH_i^*$.
Writing $J(0) = J(q-p^i_j) + (p^i_j-q)I$ implies that
$J(0)G^i_j = -c^i_j BH_i^* - c^i_j (p^i_j-q) J(q-p^i_j)^{-1} BH_i^*$.
Iterating this process yields
\begin{align}
  \begin{split}
  J(0)^k G^i_j &= \sum\nolimits_{l=0}^{k-1} -c^i_j J(0)^{k-1-l} (p^i_j-q)^l BH_i^* \\
  &- c^i_j (p^i_j-q)^k J(q-p^i_j)^{-1} BH_i^*.
  \label{eq:jok}
  \end{split}
\end{align}
Then, applying Lemma~\ref{lem:easy}(a) for $k \leq i-1$, 
setting
$z = q$,
and noting that for $\tilde{m} = 1$, $c^i_j$ contains a factor of
$\f{1}{p^i_j-q}$ gives 
\begin{align}
  \sum\nolimits_{j=1+\tilde{m}}^i c^i_j  (p^i_j-q)^l = 0 \label{eq:cp0}
\end{align}
for any $i \in \{1+\tilde{m},...,m\}$ and $l \in \{1,...,i-2\}$.
Furthermore, applying Lemma~\ref{lem:easy}(a) for $k = 1$ and 
setting
$z = q$ implies
\begin{align}
  \sum\nolimits_{j=1}^i c^i_j  = 0. \label{eq:c0}
\end{align}
for any $i \in \{1+\tilde{m},...,m\}$ (and $l = 0$).
Therefore, 
for any
$i \in \{1+\tilde{m},...,m\}$ and $k \in \{1,...,m_q-1\}$
\begin{align}
  &\tilde{m} J(0)^{k-1} BH^i_1  -J(0)^k
  \sum\nolimits_{j=1+\tilde{m}}^i G^i_j \nonumber \\
  &
  \overset{\substack{\eqref{eq:jok} \\ \eqref{eq:hgij}}}{=}
  \sum\nolimits_{l=0}^{k-1} J(0)^{k-1-l} BH_i^*
  \sum\nolimits_{j=1+\tilde{m}}^i c^i_j  (p^i_j-q)^l
  \nonumber \\
  &+ \sum_{j=1+\tilde{m}}^i c^i_j (p^i_j-q)^k J(q-p^i_j)^{-1} BH_i^*
  + \tilde{m} J(0)^{k-1} c^i_1 BH_i^* \nonumber \\
  &
  \overset{\substack{\text{regroup} \\ l=0 \text{ terms}}}{=}
  \sum\nolimits_{l=1}^{k-1} J(0)^{k-1-l} BH_i^*
  \sum\nolimits_{j=1+\tilde{m}}^i c^i_j  (p^i_j-q)^l
  \nonumber \\
  &+ \sum\nolimits_{j=1+\tilde{m}}^i c^i_j (p^i_j-q)^k J(q-p^i_j)^{-1} BH_i^*
  \nonumber \\
  &+ J(0)^{k-1} BH_i^* \sum\nolimits_{j=1+\tilde{m}}^i c^i_j
  + \tilde{m} J(0)^{k-1} c^i_1 BH_i^* \nonumber \\
  &
  \overset{\substack{\text{combine last} \\ \text{two terms}}}{=}
  \sum\nolimits_{l=1}^{k-1} J(0)^{k-1-l} BH_i^*
  \sum\nolimits_{j=1+\tilde{m}}^i c^i_j  (p^i_j-q)^l
  \nonumber \\
  &+ \sum\nolimits_{j=1+\tilde{m}}^i c^i_j (p^i_j-q)^k J(q-p^i_j)^{-1} BH_i^*
  \nonumber \\
  &+ J(0)^{k-1} BH_i^* \sum\nolimits_{j=1}^i c^i_j \nonumber \\
  &
  \overset{\substack{\eqref{eq:cp0} \\ \eqref{eq:c0}}}{=}
  \sum\nolimits_{l=i-1}^{k-1} J(0)^{k-1-l} BH_i^*
  \sum\nolimits_{j=1+\tilde{m}}^i c^i_j  (p^i_j-q)^l
  \nonumber \\
  &+ \sum\nolimits_{j=1+\tilde{m}}^i c^i_j (p^i_j-q)^k J(q-p^i_j)^{-1} BH_i^*
  \nonumber \\
  &
  \overset{\eqref{eq:gij_true}}{=}
  \sum\nolimits_{l=i-1}^{k-1} J(0)^{k-1-l} BH_i^* \sum\nolimits_{j=1+\tilde{m}}^i c^i_j
  (p^i_j-q)^l
  \nonumber \\
  &- \sum\nolimits_{j=1+\tilde{m}}^i (p^i_j-q)^k G^i_j. \label{eq:gij}
\end{align}
We compute
\begin{align}
  &\Phi_x(z) 
  \overset{\substack{\eqref{eq:gl} \\ \eqref{eq:phixt}}}{=}
  \sum\nolimits_{i=1+\tilde{m}}^m \sum\nolimits_{j=1+\tilde{m}}^i G^i_j \f{1}{z-p^i_j}
  \nonumber\\
  &+ \tilde{m} \sum\nolimits_{l=2}^{m_q+1}  J(0)^{l-2}
  \sum\nolimits_{i=1}^m BH^i_1 \f{1}{(z-q)^l}
  \nonumber \\
  &- \sum\nolimits_{l=1}^{m_q} J(0)^{l-1} \sum\nolimits_{i=1+\tilde{m}}^m
  \sum\nolimits_{j=1+\tilde{m}}^i
  G^i_j \f{1}{(z-q)^l} \nonumber \\
  &
  \overset{\substack{\text{combining} \\ \text{terms in sum}}}{=}
  \sum\nolimits_{i=1+\tilde{m}}^m \sum\nolimits_{j=1+\tilde{m}}^i G^i_j \f{1}{z-p^i_j}
  \nonumber \\
  &+  \sum_{l=2}^{m_q} \sum_{i=1+\tilde{m}}^m \left(\tilde{m} J(0)^{l-2} BH^i_1
  - J(0)^{l-1} \mkern-15mu \sum_{j=1+\tilde{m}}^i G^i_j  \right)
  \f{1}{(z-q)^l} \nonumber \\
  &- \sum\nolimits_{i=1+\tilde{m}}^m \sum\nolimits_{j=1+\tilde{m}}^i G^i_j \f{1}{z-q}
  \nonumber \\
 &+ \tilde{m} J(0)^{m_q-1} \sum\nolimits_{i=1+\tilde{m}}^m BH^i_1 \f{1}{(z-q)^{m_q+1}}
  \nonumber \\
  &+ \sum\nolimits_{l=0}^{m_q-1} \tilde{m} J(0)^l BH^1_1 \f{1}{(z-q)^{l+2}}
  \nonumber \\
  &
  \overset{\eqref{eq:gij}}{=}
  \sum\nolimits_{i=1+\tilde{m}}^m \sum\nolimits_{j=1+\tilde{m}}^i G^i_j \f{1}{z-p^i_j}
  \nonumber \\
  &+  \sum_{l=2}^{m_q} \sum_{i=1+\tilde{m}}^m \sum_{k=i-1}^{l-2} J(0)^{l-2-k} BH_i^*
  \sum_{j=1+\tilde{m}}^i \f{c^i_j  (p^i_j-q)^k }{(z-q)^l} \nonumber \\
  &- \sum\nolimits_{l=2}^{m_q} \sum\nolimits_{i=1+\tilde{m}}^m
  \sum\nolimits_{j=1+\tilde{m}}^i (p^i_j-q)^{l-1} G^i_j \f{1}{(z-q)^l} \nonumber \\
  &- \sum\nolimits_{i=1+\tilde{m}}^m \sum\nolimits_{j=1+\tilde{m}}^i G^i_j \f{1}{z-q}
  \nonumber \\
 &+ \tilde{m} J(0)^{m_q-1} \sum\nolimits_{i=1+\tilde{m}}^m BH^i_1 \f{1}{(z-q)^{m_q+1}}
  \nonumber\\
  &+ \sum\nolimits_{l=0}^{m_q-1} \tilde{m} J(0)^l BH^1_1 \f{1}{(z-q)^{l+2}}.
  \label{eq:temp}
\end{align}
Note that 
\begin{align}
  &\sum\nolimits_{i=1+\tilde{m}}^m \sum\nolimits_{j=1+\tilde{m}}^i
  G^i_j \f{1}{z-p^i_j} \nonumber \\
  &- \sum\nolimits_{l=2}^{m_q} \sum\nolimits_{i=1+\tilde{m}}^m
  \sum\nolimits_{j=1+\tilde{m}}^i (p^i_j-q)^{l-1} G^i_j \f{1}{(z-q)^l} \nonumber \\
  &- \sum\nolimits_{i=1+\tilde{m}}^m \sum\nolimits_{j=1+\tilde{m}}^i G^i_j \f{1}{z-q}
  \nonumber \\
  &
  \overset{\text{regrouping}}{=}
  \sum\nolimits_{i=1+\tilde{m}}^m \sum\nolimits_{j=1+\tilde{m}}^i G^i_j
  \left(\f{1}{z-p^i_j} - \f{1}{z-q}\right) \nonumber \\
  &- \sum\nolimits_{l=2}^{m_q} \sum\nolimits_{i=1+\tilde{m}}^m
  \sum\nolimits_{j=1+\tilde{m}}^i (p^i_j-q)^{l-1} G^i_j \f{1}{(z-q)^l} \nonumber \\
  &
  \overset{\text{simplifying}}{=}
  \sum\nolimits_{i=1+\tilde{m}}^m \sum\nolimits_{j=1+\tilde{m}}^i (p^i_j-q) G^i_j
  \f{1}{z-q}\f{1}{z-p^i_j} \nonumber \\
  &- \sum\nolimits_{l=2}^{m_q} \sum\nolimits_{i=1+\tilde{m}}^m
  \sum\nolimits_{j=1+\tilde{m}}^i (p^i_j-q)^{l-1} G^i_j \f{1}{(z-q)^l} \nonumber \\
  &
  \overset{\text{regrouping}}{=}
  \sum_{i=1+\tilde{m}}^m \sum_{j=1+\tilde{m}}^i (p^i_j-q) G^i_j
  \f{1}{z-q}\left(\f{1}{z-p^i_j} - \f{1}{z-q}\right) \nonumber \\
  &- \sum\nolimits_{l=3}^{m_q} \sum\nolimits_{i=1+\tilde{m}}^m
  \sum\nolimits_{j=1+\tilde{m}}^i (p^i_j-q)^{l-1} G^i_j \f{1}{(z-q)^l} \nonumber \\
  &
  \overset{\text{simplifying}}{=}
  \sum\nolimits_{i=1+\tilde{m}}^m \sum\nolimits_{j=1+\tilde{m}}^i (p^i_j-q)^2 G^i_j
  \f{1}{(z-q)^2} \f{1}{z-p^i_j} \nonumber \\
  &- \sum\nolimits_{l=3}^{m_q} \sum\nolimits_{i=1+\tilde{m}}^m
  \sum\nolimits_{j=1+\tilde{m}}^i (p^i_j-q)^{l-1} G^i_j \f{1}{(z-q)^l} \nonumber \\
  & ... \nonumber \\
  &
  \overset{\substack{\text{iterating} \\ \text{the above}}}{=}
  \sum\nolimits_{i=1+\tilde{m}}^m
  \sum\nolimits_{j=1+\tilde{m}}^i (p^i_j-q)^{m_q} G^i_j \f{1}{z-p^i_j}
  \f{1}{(z-q)^{m_q}} \nonumber \\
  &
  \overset{\eqref{eq:gij_true}}{=}
  \sum_{i=1+\tilde{m}}^m \sum_{j=1+\tilde{m}}^i J(q-p^i_j)^{-1}BH_i^*
  \f{-c^i_j}{z-p^i_j} \f{(p^i_j-q)^{m_q}}{(z-q)^{m_q}}
  \label{eq:first}
\end{align}
Also, 
we compute
\begin{align}
  &\sum_{l=2}^{m_q} \sum_{i=1+\tilde{m}}^m \sum_{k=i-1}^{l-2} J(0)^{l-2-k} BH_i^*
  \sum_{j=1+\tilde{m}}^i \f{c^i_j  (p^i_j-q)^k }{(z-q)^l} \nonumber \\
  &
  \overset{\substack{\text{reverse} \\ \text{sums}}}{=}
  \sum_{i=1+\tilde{m}}^m \sum_{l=i+1}^{m_q}  \sum_{k=i-1}^{l-2} J(0)^{l-2-k} BH_i^*
  \sum_{j=1+\tilde{m}}^i \f{c^i_j  (p^i_j-q)^k }{(z-q)^l} \nonumber \\
  %
  %
  %
  %
  %
  &
  \overset{\substack{l' = l-2-k \\ k' = m_q-l}}{=}
  \sum\nolimits_{i=1+\tilde{m}}^m \sum\nolimits_{k'=0}^{m_q-i-1}
  \sum\nolimits_{l'=0}^{m_q-i-1-k'}
  \nonumber \\
  & \mkern+100mu J(0)^{l'} BH_i^*
  \sum\nolimits_{j=1+\tilde{m}}^i \f{c^i_j  (p^i_j-q)^{m_q-2-k'-l'}}{(z-q)^{m_q-k'}}
  \nonumber \\
  %
  %
  &
  \overset{\substack{\text{reverse} \\ \text{sums}}}{=}
  \sum\nolimits_{i=1+\tilde{m}}^m \sum\nolimits_{l'=0}^{m_q-i-1}
  \sum\nolimits_{k'=0}^{m_q-i-1-l'}
  \nonumber \\
  & \mkern+100mu J(0)^{l'} BH_i^*
  \sum\nolimits_{j=1+\tilde{m}}^i \f{c^i_j  (p^i_j-q)^{m_q-2-k'-l'}}{(z-q)^{m_q-k'}}.
  \label{eq:second}
\end{align}
Furthermore, we have
\begin{align}
  \begin{split}
   &\tilde{m} J(0)^{m_q-1} \sum\nolimits_{i=1+\tilde{m}}^m BH^i_1 \f{1}{(z-q)^{m_q+1}}
    \\
    & \mkern+80mu
    \overset{\eqref{eq:hgij}}{=}
    \tilde{m} \sum\nolimits_{i=1+\tilde{m}}^m c^i_1 J(0)^{m_q-1} BH_i^*
    \f{1}{(z-q)^{m_q+1}}
  \\
  &\sum\nolimits_{l=0}^{m_q-1} \tilde{m} J(0)^l BH^1_1 \f{1}{(z-q)^{l+2}} \\
  & \mkern+80mu
  \overset{\eqref{eq:hgij}}{=}
  \sum\nolimits_{l=0}^{m_q-1} \tilde{m} J(0)^l BH_1^* \f{1}{(z-q)^{l+2}}.
  \end{split}
  \label{eq:third}
\end{align}
Substituting \eqref{eq:first}, \eqref{eq:second}, and \eqref{eq:third} into
\eqref{eq:temp} yields
\begin{align*}
  &\Phi_x(z)
  = \left( \sum\nolimits_{l=0}^{m_q-1} J(0)^l \tilde{m} \f{1}{(z-q)^{l+2}} \right)
  BH_1^*
  \\
  &+ \sum\nolimits_{i=1+\tilde{m}}^m \left(
  \sum\nolimits_{j=1+\tilde{m}}^i
  J(q-p^i_j)^{-1} \f{-c^i_j}{z-p^i_j} \f{(p^i_j-q)^{m_q}}{(z-q)^{m_q}} \right.
  \\
  &+ \sum_{l=0}^{m_q-i-1}  \sum_{k=0}^{m_q-i-1-l} J(0)^{l}
  \sum_{j=1+\tilde{m}}^i \f{c^i_j  (p^i_j-q)^{m_q-2-k-l}}{(z-q)^{m_q-k}} \\
  &\left. \vphantom{\sum_{l=1}^2}
  + \tilde{m} c^i_1 J(0)^{m_q-1} \f{1}{(z-q)^{m_q+1}} \right) BH_i^*.
\end{align*}
This completes the proof of \eqref{eq:big_identity}.

Next we derive the upper bound of the difference from $\f{1}{(z-q)^{i+l+1}}$
for the elements of the superdiagonal of the
term multiplying $BH_i^*$.  Note that, by the form of \eqref{eq:big_identity},
all elements on each such superdiagonal are identical.
By \eqref{eq:big_identity}, for $i \in \{1+\tilde{m},...,m\}$
and $l \in \{m_q-i,...,m_q-2\}$,
the $l$th superdiagonal
of the term multiplying $BH_i^*$ in $\Phi_x(z)$ is 
\begin{align*}
  &\left( \sum\nolimits_{j=1+\tilde{m}}^i
   (-1)^l (q-p^i_j)^{-(l+1)} \f{-c^i_j (p^i_j-q)^{m_q}}{z-p^i_j} \right)
  \f{1}{(z-q)^{m_q}} \\
  &=   \left( \sum_{j=1+\tilde{m}}^i
   \f{(-1)^{l+1}}{(-1)^{l+1}} (p^i_j-q)^{-(l+1)}
  \f{c^i_j (p^i_j-q)^{m_q}}{z-p^i_j} \right) \f{1}{(z-q)^{m_q}} \\
  &= \left( \sum\nolimits_{j=1+\tilde{m}}^i c^i_j (p^i_j-q)^{m_q-1-l} \f{1}{z-p^i_j}
  \right)
  \f{1}{(z-q)^{m_q}} \\
  &
  \overset{\text{Lemma}~\ref{lem:easy}\text{(a)}}{=}
  \f{(z-q)^{m_q-1-l-\tilde{m}}}{\prod\nolimits_{j=1+\tilde{m}}^i (z-p^i_j)}
  \f{1}{(z-q)^{m_q}}
  = \f{(z-q)^{-(l+1+\tilde{m})}}{\prod\nolimits_{j=1+\tilde{m}}^i (z-p^i_j)}
\end{align*}
where for Lemma~\ref{lem:easy}(a) note that for
$\tilde{m} = 1$, $c^i_j$ contains a factor of $\f{1}{p^i_j-q}$.
For $i \in \{1+\tilde{m},...,m\}$,
the $(m_q-1)$th superdiagonal  (i.e. $l = m_q-1$)
of the term multiplying $BH_i^*$ in $\Phi_x(z)$ is 
\begin{align*}
  &\left( \sum_{j=1+\tilde{m}}^i
   (-1)^{m_q-1} (q-p^i_j)^{-m_q} \f{-c^i_j (p^i_j-q)^{m_q}}{z-p^i_j} \right)
  \f{1}{(z-q)^{m_q}}
  \\
  &+ \tilde{m} c^i_1 \f{1}{(z-q)^{m_q+1}} \\
  &= \left( \sum_{j=1+\tilde{m}}^i
   \f{(-1)^{m_q}}{(-1)^{m_q}} (p^i_j-q)^{-m_q} \f{c^i_j (p^i_j-q)^{m_q}}{z-p^i_j}
  \right)
  \f{1}{(z-q)^{m_q}} \\
  &+ \tilde{m} c^i_1 \f{1}{(z-q)^{m_q+1}} \\
  &= \left( \sum\nolimits_{j=1+\tilde{m}}^i c^i_j \f{1}{z-p^i_j} \right)
  \f{1}{(z-q)^{m_q}}
  + \tilde{m} c^i_1 \f{1}{(z-q)^{m_q+1}} \\
  &= \left( \sum\nolimits_{j=1+\tilde{m}}^i c^i_j \f{1}{z-p^i_j}
  + \tilde{m} c^i_1 \f{1}{(z-q)} \right) \f{1}{(z-q)^{m_q}} \\
  &= \left( \sum_{j=1}^i c^i_j \f{1}{z-p^i_j} \right) \f{1}{(z-q)^{m_q}}
  \overset{\text{[1, Eq.~7]}}{=} 
  \mkern-10mu
  \f{1}{\prod\limits_{j=1}^i (z-p^i_j)} \f{1}{(z-q)^{m_q}} \\
  &= \f{1}{\prod\nolimits_{j=1}^i (z-p^i_j)} \f{1}{(z-q)^{l+1}}.
\end{align*}
For $i \in \{1+\tilde{m},...,m\}$ and $l \in \{0,...,m_q-i-1\}$,
the $l$th superdiagonal
of the term multiplying $BH_i^*$ in $\Phi_x(z)$ is
\begin{align*}
  &\left( \sum\nolimits_{j=1+\tilde{m}}^i
   (-1)^l (q-p^i_j)^{-(l+1)}\f{-c^i_j (p^i_j-q)^{m_q}}{z-p^i_j} \right)
  \f{1}{(z-q)^{m_q}} \\
  &+ \sum\nolimits_{k=0}^{m_q-i-1-l} 
  \sum\nolimits_{j=1+\tilde{m}}^i c^i_j  (p^i_j-q)^{m_q-2-k-l} \f{1}{(z-q)^{m_q-k}} \\
  &=  \left( \sum\nolimits_{j=1+\tilde{m}}^i c^i_j (p^i_j-q)^{m_q-l-1}
  \f{1}{z-p^i_j} \right)
  \f{1}{(z-q)^{m_q}} \\
  &+ \sum\nolimits_{k=0}^{m_q-i-1-l} 
  \sum\nolimits_{j=1+\tilde{m}}^i c^i_j  (p^i_j-q)^{m_q-2-k-l} \f{1}{(z-q)^{m_q-k}} \\
  &
  \overset{\text{Lemma}~\ref{lem:easy}\text{(b)}}{=}
  \f{1}{(z-q)^{m_q}} \left(
 \f{(z-q)^{m_q-l-1-\tilde{m}}}{\prod\nolimits_{j=1+\tilde{m}}^i (z-p^i_j)} \right. \\
 &\left.
 \mkern+20mu 
 - \sum\nolimits_{k=0}^{m_q-i-1-l}
 \sum\nolimits_{j=1+\tilde{m}}^i c^i_j (p^i_j-q)^{m_q-2-k-l} (z-q)^k
 \right)   \\
  &+ \sum\nolimits_{k=0}^{m_q-i-1-l} 
  \sum\nolimits_{j=1+\tilde{m}}^i c^i_j  (p^i_j-q)^{m_q-2-k-l} \f{1}{(z-q)^{m_q-k}} \\
  &
  \overset{\text{simplifying}}{=}
  \f{1}{\prod\nolimits_{j=1+\tilde{m}}^i (z-p^i_j)} \f{1}{(z-q)^{l+1+\tilde{m}}} \\
  &- \sum\nolimits_{k=0}^{m_q-i-1-l}
  \sum\nolimits_{j=1+\tilde{m}}^i c^i_j (p^i_j-q)^{m_q-2-k-l} 
  \f{1}{(z-q)^{m_q-k}} \\
  &+ \sum\nolimits_{k=0}^{m_q-i-1-l} 
  \sum\nolimits_{j=1+\tilde{m}}^i c^i_j  (p^i_j-q)^{m_q-2-k-l} \f{1}{(z-q)^{m_q-k}} \\
  &
  \overset{\text{cancelling}}{=}
  \f{1}{\prod\nolimits_{j=1+\tilde{m}}^i (z-p^i_j)} \f{1}{(z-q)^{l+1+\tilde{m}}}
\end{align*}
where for Lemma~\ref{lem:easy}(b) note that for
$\tilde{m} = 1$, $c^i_j$ contains a factor of $\f{1}{p^i_j-q}$.
Finally, if $\tilde{m} = 1$, then for $i = 1$ and any $l \in \{0,...,m_q-1\}$,
the $l$th superdiagonal of the term multiplying $BH_1^*$ in $\Phi_x(z)$ is
given by
\begin{align*}
  \tilde{m} \f{1}{(z-q)^{l+2}} = \tilde{m} \f{1}{(z-q)^{i+l+1}}.
\end{align*}
Thus, combining the cases above, by Lemma~\ref{lem:abs} for every
$i \in \{1,...,m\}$
and $l \in \{0,...,m_q-1\}$, each term in the $l$th superdiagonal of the
term multiplying $BH_i^*$ in \eqref{eq:big_identity} has difference from
$\f{1}{(z-q)^{i+l+1}}$ bounded by $K_{i,l} D(\p)$ for some $K_{i,l} > 0$.
\end{proof}

Now we are ready to prove Lemma~\ref{lem:bound}.

\begin{proof}[Proof of Lemma~\ref{lem:bound}]
The proof begins by selecting an optimal solution ($\Phi_x^*$,$\Phi_u^*$) to
the infinite dimensional control design problem of \eqref{eq:spa_inf},
and constructing $\Phi_u(z) = \sum_{p \in \p} H_p \f{1}{z-p}$ by
\cite[Theorem~1]{Fi22a} to approximate $\Phi_u^*$.
By \cite[Theorem~1]{Fi22a}, this immediately implies that the
approximation error bounds for $\Phi_u$ of \eqref{eq:ubound} are satisfied.
Next, $\Phi_x$ is defined as the unique transfer function that satisfies
the SLS constraint in \eqref{eq:spa_inf}.
The remainder of the proof will show that $\Phi_x$ is a feasible solution to
\eqref{eq:spa_obj}-\eqref{eq:spa_imp}, and
that it satisfies the approximation error bounds of \eqref{eq:xbound}.

Towards that end, first it is shown that it suffices to work in coordinates in
which $A$ is in Jordan normal form.
Next it is shown that, in these coordinates, the approximation error bounds
and the SLS constraints decouple according to each elementary Jordan block in
$A$, so it suffices to prove the result for a single elementary Jordan block
with eigenvalue $\lambda$.
Afterwards, it is shown that the SLS constraint uniquely determines the poles
and multiplicities of $\tilde{\Phi}_x$ from those of $\tilde{\Phi}_u$
for any transfer functions $(\tilde{\Phi}_x,\tilde{\Phi}_u)$ in $\f{1}{z}\h$
that satisfy it.
From the choice of $\Phi_u$, this immediately implies that $\Phi_x$ is a
feasible solution to \eqref{eq:spa_obj}-\eqref{eq:spa_imp}.

Subsequently, for each pole $q$ in $\Phi_x^*$ that appears in $\Phi_u^*$,
by \cite[Theorem~1]{Fi22a} there exist poles in $\Phi_u$ for approximating
the portion of $\Phi_u^*$ corresponding to pole $q$.  By the relationship
between $\Phi_u$ and $\Phi_x$ described above, we then consider the resulting
poles that appear in $\Phi_x$, and will show that the portion of $\Phi_x$
corresponding to these poles closely approximates the portion of $\Phi_x^*$
corresponding to the pole $q$.  To do so, we fix a pole $q$ in $\Phi_x^*$
and consider two cases: Case 1 where $q \neq \lambda$, and Case 2 where
$q = \lambda$.
For each of these cases we use the SLS constraints to determine the
coefficients in the portions of $\Phi_x^*$ and $\Phi_x$ corresponding to pole
$q$ and the poles used to approximate it, respectively, and then bound the
resulting approximation error.
As $q$ was arbitrary, this then yields the desired approximation error bounds
for $\Phi_x$ of \eqref{eq:xbound}.

First we obtain an optimal solution to the infinite dimensional control design
problem, and use \cite[Theorem~1]{Fi22a} to find $\Phi_u$ which closely
approximates $\Phi_u^*$.
Let $(\Phi_x^*,\Phi_u^*)$ be an optimal solution to \eqref{eq:spa_inf}, which
exists by Assumption A6.
By \cite[Theorem~1]{Fi22a}, there exist coefficient matrices
$\{H_p\}_{p \in \p}$ such that, if we define
$\Phi_u(z) = \sum_{p \in \p} H_p \f{1}{z-p}$ then $\Phi_u \in \f{1}{z}\h$,
$\left|\left|\Phi_u-\Phi_u^*\right|\right|_{H_\infty} \leq K^u_\infty D(\p)$,
and $\left|\left|\Phi_u-\Phi_u^*\right|\right|_{H_2} \leq K^u_2 D(\p)$.
Define $\Phi_x(z) = (zI-A)^{-1}(B\Phi_u(z)+I)$ and note that this implies
$(\Phi_x,\Phi_u)$ satisfy the SLS constraint in \eqref{eq:spa_inf} by
construction.

As $\mathcal{Q}$ and $\sigma$ are finite,
$\eta = \min_{q \in \mathcal{Q}, \lambda \in \sigma, \lambda \neq \sigma}
d(\lambda,q) > 0$
and $d(\lambda,q) \geq \eta$ for all such $\lambda \neq q$.
By Assumption A6, for every $q \in \mathcal{Q}$ and $\lambda \in \sigma$
with $\lambda \neq q$, $d(\lambda,\p(q)) > 0$, where $\p(q)$ are the $m_q$
closest poles in $\p$ to $q$.
This implies that $\delta = \min_{q \in \mathcal{Q}, \lambda \in \sigma, \lambda \neq q}
d(\lambda,\p(q)) > 0$
and that $d(\lambda,\p(q)) \geq \delta$ for all such $\lambda \neq q$.

Next we show that it suffices to work in Jordan normal form, and with a
single elementary Jordan block.
There exists matrices $J$ in Jordan normal form and $V$ invertible such that
$J = VAV^{-1}$.
Fix $z \in \c$ for the remainder of the proof.
We will show that there exists $K > 0$ such that
\begin{align}
  ||V\Phi_x(z)-V\Phi_x^*(z)||_2 \leq K D(\p). \label{eq:goal_phix}
\end{align}
This will imply that
\begin{align*}
  &||\Phi_x-\Phi_x^*||_{H_\infty} =
  \sup_{z \in \c} ||V^{-1}(V\Phi_x(z)-V\Phi_x^*(z))||_2 \\
  &\leq ||V^{-1}||_2 \sup_{z \in \c} ||V\Phi_x(z)-V\Phi_x^*(z)||_2 
  \leq K^x_\infty D(\p) \\
  &||\Phi_x-\Phi_x^*||_{H_2} \leq \sqrt{n}||\Phi_x-\Phi_x^*||_{H_\infty}
  \leq K^x_2 D(\p) \\
  & K^x_\infty = ||V^{-1}||_2 K, \quad K^x_2 = \sqrt{n} K^x_\infty.
\end{align*}
So, to prove the lemma it suffices to show that \eqref{eq:goal_phix} holds
and that $(\Phi_x,\Phi_u)$ is a feasible solution to
\eqref{eq:spa_obj}-\eqref{eq:spa_imp}.
Furthermore, letting $J(\lambda)$ denote an elementary Jordan block with
eigenvalue $\lambda$ in $J$, and $M|_{J(\lambda)}$ the restriction of the
matrix $M$ to the rows corresponding to the rows of $J(\lambda)$ in $J$, we have
\begin{align*}
  &\left|\left| V\Phi_x(z) - V\Phi_x^*(z) \right|\right|_2 \\
  &\leq \sum\nolimits_{\lambda \in \sigma} \sum\nolimits_{J(\lambda) \text{ in } J}
  \left|\left|(V\Phi_x(z))|_{J(\lambda)} - (V\Phi_x^*(z))|_{J(\lambda)}
  \right|\right|_2.
\end{align*}
Thus, to prove \eqref{eq:goal_phix} it suffices to show that for each
elementary Jordan block $J(\lambda)$ in $J$ there exists a constant
$K_{J(\lambda)} > 0$ such that
\begin{align}
    \left|\left|(V\Phi_x(z))|_{J(\lambda)} - (V\Phi_x^*(z))|_{J(\lambda)}
    \right|\right|_2 \leq K_{J(\lambda)} D(\p).
    \label{eq:goal_block}
\end{align}
Premultiplying the SLS constraint in \eqref{eq:spa_inf} by $V$ implies
that $(zI-J)V\Phi_x - VB\Phi_u = V$, and note that this is satisfied by both
$(V\Phi_x,\Phi_u)$ and $(V\Phi_x^*,\Phi_u^*)$.
As $(zI-J)$ is block diagonal, this equation decouples into
independent equations for each elementary Jordan block $J(\lambda)$ in $J$
given by
$(zI-J(\lambda))(V\Phi_x(z))|_{J(\lambda)} - (VB)|_{J(\lambda)} \Phi_u(z) =
V|_{J(\lambda)}$.
Therefore, both our objective \eqref{eq:goal_block} and the SLS constraints
become decoupled for each $J(\lambda)$, so for the remainder of the proof
we fix a particular $\lambda \in \sigma$ and $J(\lambda)$ in $J$.
For notational convenience, for the remainder of the proof we abuse notation
and let $\Phi_x$, $\Phi_x^*$, $B$, and $V$ denote
$(V\Phi_x)|_{J(\lambda)}$, $(V\Phi_x^*)|_{J(\lambda)}$, $(VB)|_{J(\lambda)}$,
and $V|_{J(\lambda)}$, respectively.
Then the objective \eqref{eq:goal_block} and the SLS constraint become
\begin{align}
  \left|\left|\Phi_x(z) - \Phi_x^*(z) \right|\right|_2 &\leq K D(\p)
  \label{eq:goal_block2}
  \\
    (zI-J(\lambda)) \Phi_x(z) - B \Phi_u(z) &= V. \label{eq:final_cons}
\end{align}
To complete the proof it suffices to show that there exists $K > 0$ such
that \eqref{eq:goal_block2} holds, and that
$(\Phi_x,\Phi_u)$ is a feasible solution to
\eqref{eq:spa_obj}-\eqref{eq:spa_imp}.

Let $m_\lambda$ denote the multiplicity of $\lambda$ in $J(\lambda)$.
As $(zI-A)^{-1}$ is strictly proper real rational and $(B\Phi_u(z)+I)$ is
proper real rational,
their product $\Phi_x$ is strictly proper real rational.
Therefore, $\Phi_x$ has a partial fraction decomposition which does not
include any constant or polynomial terms, and in which all poles have
finite multiplicity.

Now we derive the relationship between the poles and multiplicities of
any pair of transfer functions which satisfy the SLS constraint.
Let $(\tilde{\Phi}_x,\tilde{\Phi}_u)$ be any transfer functions which
satisfy \eqref{eq:final_cons} and such that $\tilde{\Phi}_u \in \f{1}{z} \h$
and $\tilde{\Phi}_x$ is strictly proper rational.
Let $q$ be any pole of $\tilde{\Phi}_x$, $m_q = m_\lambda$ if
$q = \lambda$ or $m_q = 0$ otherwise, and $m$ the multiplicity of
$q$ in $\tilde{\Phi}_u$.
Note that $m = 0$ if $q$ is not a pole of $\tilde{\Phi}_u$.
Let $\hat{m}$ be the multiplicity of $q$ in $\tilde{\Phi}_x$.
Then the terms in the partial fraction decompositions of $\tilde{\Phi}_u$ and
$\tilde{\Phi}_x$ corresponding to pole $q$ are given by
$\sum_{i=1}^m H^i \f{1}{(z-q)^i}$ and
$\sum_{i=1}^{\hat{m}} G^i \f{1}{(z-q)^i}$, respectively.
By uniqueness of the partial fraction decomposition, \eqref{eq:final_cons}
therefore implies that
\begin{align}
  G^{i+1} &= J(\lambda-q) G^i + BH^i, \quad i \in \{1,...,m\} \label{eq:bh} \\
  G^{i+1} &= J(\lambda-q) G^i, \quad i \in \{m+1,...,\hat{m}-1\}
  \label{eq:ignore} \\
  0 &= J(\lambda-q)G^{\hat{m}}. \label{eq:zero}
\end{align}
First consider the case where $\lambda \neq q$.
It is straightforward to verify the following fact: \\
Fact 2. If $J(\lambda)G = 0$ for $\lambda \neq 0$ then $G = 0$.

Then by Fact 2 and \eqref{eq:zero}, $G^{\hat{m}} = 0$.
Proceeding downwards in $i$, repeated application of Fact 2 and
\eqref{eq:ignore} imply that $G^i = 0$ for $i \in \{m+1,...,\hat{m}\}$.
So, in this case the order of $q$ in $\tilde{\Phi}_x$ is $m = m + m_q$.
Next consider the case where $\lambda = q$.
Then, by \eqref{eq:ignore}, $G^i = J(0)^{i-(m+1)}G^{m+1}$ for
$i \in \{m+1,...,\hat{m}-1\}$.
As $J(0)^{m_\lambda} = 0 = J(0)^{m_q}$, this implies that
$G^i = 0$ for $i \geq m_q + m + 1$.
So, in this case the order of $q$ in $\tilde{\Phi}_x$ is $m+m_q$.
Combining the above cases implies the following fact:
that $\tilde{\Phi}_x$ only contains poles in
$\sigma$ and $\tilde{\Phi}_u$, and that their multiplicities are given by
$m+m_q$.
Applying this fact to $(\Phi_x,\Phi_u)$ implies that $\Phi_x \in \f{1}{z} \h$
and is a feasible solution to \eqref{eq:spa_obj}-\eqref{eq:spa_imp}.
Hence, to complete the proof it suffices to prove \eqref{eq:goal_block2}.

In what follows, we show that to prove \eqref{eq:xbound} it suffices to fix
a particular pole in $\Phi_x^*$, and to show that a certain portion of
$\Phi_x$ closely approximates the portion of $\Phi_x^*$ corresponding to this
pole.  This is done by using the construction of \cite[Theorem~1]{Fi22a} to
approximate $\Phi_u^*$ by $\Phi_u$.
Let $\mathcal{Q}$ denote the poles of $\Phi_x^*$.
For each $q \in \mathcal{Q}$, its contribution to the partial fraction
decompositions of $\Phi_x^*$ and $\Phi_u^*$ is given, respectively, by
\begin{align}
  \sum\nolimits_{i=1}^{m_q+m} G_i^* \f{1}{(z-q)^i}, \quad
  \sum\nolimits_{i=1}^m H_i^* \f{1}{(z-q)^i} \label{eq:phiux}
\end{align}
by the above fact.
Since $\Phi_u$ was constructed as in \cite[Theorem~1]{Fi22a}, the portion of
$\Phi_u$ that was chosen to approximate the pole at $q$ in $\Phi_u^*$ is given
by
\begin{align}
  \sum\nolimits_{i=1}^m \sum\nolimits_{j=1}^i H^i_j \f{1}{z-p^i_j},
  \quad 
  H^i_j = c^i_j H_i^* \label{eq:phiu} 
\end{align}
for all $i \in \{1,...,m\}$ and $j \in \{1,...,i\}$, where
$\{c^i_j\}_{j=1}^i$ are the constants chosen as in \cite[Corollary~2]{Fi22a}
for approximating the pole $q$ with multiplicity $i$ by the poles
$\{p^i_j\}_{j=1}^i$.
Let $\tilde{m} = 1$ if $q = \lambda$ and $\Phi_u$ contains a pole at $q$, and
$\tilde{m} = 0$ otherwise.
If $\tilde{m} = 1$, reorder the poles in $\{p^i_j\}_{j=1}^i$ for each
$i \in \{1,...,m\}$ such that $p^i_1 = q$.
Then the above fact implies that the portion of $\Phi_x$ corresponding to
the above portion of $\Phi_u$ is given by
\begin{align}
  \sum\nolimits_{i=1}^{m_q+\tilde{m}} G_i \f{1}{(z-q)^i}
  + \sum\nolimits_{i=1+\tilde{m}}^m \sum\nolimits_{j=1+\tilde{m}}^m
  G^i_j \f{1}{(z-p^i_j)}.
  \label{eq:phix}
\end{align}
Hence, from \eqref{eq:phix} and \eqref{eq:phiux} we compute
\begin{align*}
  &\left|\left|
  \Phi_x(z) - \Phi_x^*(z)
  \right|\right|_2 \\
  &=   \left|\left|
  \sum_{q \in \mathcal{Q}}   \sum_{i=1}^{m_q+\tilde{m}} G_i \f{1}{(z-q)^i}
  + \sum_{i=1+\tilde{m}}^m \sum_{j=1+\tilde{m}}^m G^i_j \f{1}{(z-p^i_j)}
  \right.\right.\\
  &\quad\;\; \left.\left.- \sum\nolimits_{q \in \mathcal{Q}}
  \sum\nolimits_{i=1}^{m_q+m} G_i^*
  \f{1}{(z-q)^i}
  \right|\right|_2 \\
  &\leq \sum_{q \in \mathcal{Q}}
  \left|\left|
  \sum_{i=1}^{m_q+\tilde{m}} G_i \f{1}{(z-q)^i}
  + \sum_{i=1+\tilde{m}}^m \sum_{j=1+\tilde{m}}^m G^i_j \f{1}{(z-p^i_j)}
  \right.\right. \\
  &\quad\quad\quad\; \left.\left.
  - \sum\nolimits_{i=1}^{m_q+m} G_i^* \f{1}{(z-q)^i}
  \right|\right|_2
\end{align*}
so, since $\mathcal{Q}$ is finite, to prove \eqref{eq:goal_block2} it suffices
to show that there exists $K_q > 0$ such that
\begin{align}
  \begin{split}
    &\left|\left|
  \sum\nolimits_{i=1}^{m_q+\tilde{m}} G_i \f{1}{(z-q)^i}
  + \sum\nolimits_{i=1+\tilde{m}}^m \sum\nolimits_{j=1+\tilde{m}}^m G^i_j \f{1}{(z-p^i_j)}
  \right.\right. \\
  &\;\left.\left.- \sum\nolimits_{i=1}^{m_q+m} G_i^* \f{1}{(z-q)^i}
  \right|\right|
  \leq K_q D(\p)
  \end{split}
  \label{eq:new_goal}
\end{align}
for each $q \in \mathcal{Q}$.
Towards that end, fix $q \in \mathcal{Q}$
and for the remainder of the proof let $\Phi_x(z)$ and $\Phi_x^*(z)$ denote the
contributions to $\Phi_x(z)$ and $\Phi_x^*(z)$ given by \eqref{eq:phix}
and \eqref{eq:phiux}, respectively, as in \eqref{eq:new_goal}.
We consider two cases.

Case 1: $q \neq \lambda$.
Substituting \eqref{eq:phix}, \eqref{eq:phiu},
and \eqref{eq:phiux} into \eqref{eq:bh}-\eqref{eq:zero} implies that
\begin{align*}
  -J(\lambda-q)G_m^* &= BH_m^* \\
  G_{i-1}^* &= J(\lambda-q)^{-1}(G_i^* - BH_{i-1}^*), \quad i \in \{2,...,m\} \\
  -J(\lambda-p^i_j)G^i_j &= BH^i_j, \quad i \in \{1,...,m\}, j \in \{1,..,i\}
\end{align*}
and all other coefficients in $\Phi_x$ and $\Phi_x^*$ are zero.
%
The above implies that
$G_l^* = -\sum_{i=l}^m J(\lambda-q)^{-(i+1-l)} BH_i^*$ for all
$l \in \{1,...,m\}$.
Define $G_{(l,i)}^* = -J(\lambda-q)^{-(i+1-l)}BH_i^*$
for all $l \in \{1,...,m\}$ and $i \in \{l,...,m\}$,
and note that
$G_l^* = \sum_{i=l}^m G_{(l,i)}^*$.
Write
$G^i_j = c^i_j(G_{(i,i)}^* + \Delta G^i_j)$.
Then, by \eqref{eq:phiu}
\begin{align*}
 J(\lambda-q)G_{(i,i)}^* &= -BH_i^*
  = -\f{1}{c^i_j}BH^i_j
  = \f{1}{c^i_j}J(\lambda-p^i_j)G^i_j \\
  &= J(\lambda-p^i_j)(G_{(i,i)}^* + \Delta G^i_j) \\
  &= (J(\lambda-q) - (p^i_j-q)I)(G_{(i,i)}^* + \Delta G^i_j)
\end{align*}
so $0 =  -(p^i_j-q)G_{(i,i)}^* + J(\lambda-p^i_j) \Delta G^i_j$
and \\
$\Delta G^i_j = (p^i_j-q)J(\lambda-p^i_j)^{-1}G_{(i,i)}^*$.
In summary,
\begin{align*}
  \Phi_x^* &= \sum\nolimits_{l=1}^m G_l^* \f{1}{(z-q)^l}
  = \sum\nolimits_{i=1}^m \sum\nolimits_{l=1}^i G_{(l,i)}^* \f{1}{(z-q)^l} \\
  G_{(l,i)}^* &= -J(\lambda-q)^{-(i+1-l)}BH_i^*, \quad l \in \{1,...,m\} \\
  \Phi_x &= \sum\nolimits_{i=1}^m \sum\nolimits_{j=1}^i G^i_j \f{1}{z-p^i_j} \\
  G^i_j &= c^i_j(I + (p^i_j-q)J(\lambda-p^i_j)^{-1})G_{(i,i)}^*.
\end{align*}
For any $i \in \{1,...,m\}$ and $l \in \{2,...,i\}$ write \\
  $J(\lambda-p^i_j)^{-1} G_{(l,i)}^* = G_{(l-1,i)}^* + \Delta G$.
Note that for $i \in \{1,...,m\}$ and $l \in \{2,...,i\}$,
  $J(\lambda-q)G_{(l-1,i)}^* =  G_{(l,i)}^*$.
Then
\begin{align*}
  &J(\lambda-q)G_{(l-1,i)}^* =  G_{(l,i)}^*
  = J(\lambda-p^i_j)(G_{(l-1,i)}^* + \Delta G) \\
  &= (J(\lambda-q) - (p^i_j-q)I)(G_{(l-1,i)}^* + \Delta G)
\end{align*}
so
$0 =  - (p^i_j-q)G_{(l-1,i)}^* + J(\lambda-p^i_j)\Delta G$
and \\
  $\Delta G = (p^i_j-q)J(\lambda-p^i_j)^{-1}G_{(l-1,i)}^*$
which implies that \\
  $J(\lambda-p^i_j)^{-1} G_{(l,i)}^* =
  (I + (p^i_j-q)J(\lambda-p^i_j)^{-1})G_{(l-1,i)}^*$.
Applying this equation recursively implies that
for $i \in \{1,...,m\}$ and $j \in \{1,...,i\}$
\begin{align*}
  &G^i_j = c^i_jG_{(i,i)}^* + c^i_j(p^i_j-q)J(\lambda-p^i_j)^{-1}G_{(i,i)}^* \\
  &= c^i_jG_{(i,i)}^*
  + c^i_j(p^i_j-q)(I + (p^i_j-q)J(\lambda-p^i_j)^{-1})G_{(i-1,i)}^* \\
  &= c^i_jG_{(i,i)}^* + c^i_j(p^i_j-q)G_{(i-1,i)}^* \\
  &+ c^i_j(p^i_j-q)^2J(\lambda-p^i_j)^{-1}G_{(i-1,i)}^*
  = ... \\
  &= \sum_{l=1}^i c^i_j (p^i_j-q)^{i-l} G_{(l,i)}^*
  + c^i_j (p^i_j-q)^i J(\lambda-p^i_j)^{-1}G_{(1,i)}^*.
\end{align*}
Therefore,
\begin{align*}
  \Phi_x(z) &= \sum_{i=1}^m \sum_{j=1}^i G^i_j \f{1}{z-p^i_j} \\
  &
  \overset{\substack{\text{above} \\ \text{identity}}}{=}    
  \sum_{i=1}^m \sum_{j=1}^i \sum_{l=1}^i c^i_j (p^i_j-q)^{i-l} G_{(l,i)}^*
  \f{1}{z-p^i_j} \\
  &+ \sum_{i=1}^m \sum_{j=1}^i c^i_j (p^i_j-q)^i J(\lambda-p^i_j)^{-1}G_{(1,i)}^*
  \f{1}{z-p^i_j} \\
  &
  \overset{\substack{\text{reverse} \\ \text{sum order}}}{=}
  \sum_{i=1}^m \sum_{l=1}^i G_{(l,i)}^*
  \left(\sum_{j=1}^i c^i_j (p^i_j-q)^{i-l} \f{1}{z-p^i_j} \right) \\
  &+ \sum_{i=1}^m 
  \left(\sum_{j=1}^i c^i_j (p^i_j-q)^i J(\lambda-p^i_j)^{-1} \f{1}{z-p^i_j}\right)
  G_{(1,i)}^* \\
  &
  \overset{\text{Lemma}~\ref{lem:easy}\text{(a)}}{=}    
  \sum_{i=1}^m \sum_{l=1}^i G_{(l,i)}^* \f{(z-q)^{i-l}}{\prod\limits_{j=1}^i (z-p^i_j)}\\
  &+ \sum_{i=1}^m 
  \left(\sum_{j=1}^i c^i_j (p^i_j-q)^i J(\lambda-p^i_j)^{-1} \f{1}{z-p^i_j}\right)
  G_{(1,i)}^*.
\end{align*}
Thus,
\begin{align*}
  \Phi_x(z) - \Phi_x^*(z)
  &= \sum_{i=1}^m \sum_{l=1}^i G_{(l,i)}^*
  \left(\f{(z-q)^{i-l}}{\prod\limits_{j=1}^i (z-p^i_j)} - \f{1}{(z-q)^l} \right) \\
  &\mkern-80mu + \sum_{i=1}^m 
  \left(\sum_{j=1}^i c^i_j (p^i_j-q)^i J(\lambda-p^i_j)^{-1} \f{1}{z-p^i_j}\right)
  G_{(1,i)}^*.
\end{align*}
This implies 
\begin{align*}
  &\left|\left|\Phi_x(z) - \Phi_x^*(z) \right|\right|_2 \\
  &
  \overset{\substack{\text{triangle} \\ \text{inequality}}}{\leq}  
  \sum_{i=1}^m \sum_{l=1}^i ||G_{(l,i)}^*||_2
  \left|\f{(z-q)^{i-l}}{\prod_{j=1}^i (z-p^i_j)} - \f{1}{(z-q)^l} \right| \\
  &+ \sum_{i=1}^m ||G_{(1,i)}^*||_2
  \left|\left|
  \sum_{j=1}^i c^i_j (p^i_j-q)^i J(\lambda-p^i_j)^{-1} \f{1}{z-p^i_j}\right|
  \right|_2 \\
  &
  \overset{\substack{\text{Lemma}~\ref{lem:abs} \\
      \text{Corollary}~\ref{cor:hard}\text{(b)}}}{\leq}  
  K D(\p) \\
  K &= \sum\nolimits_{i=1}^m \sum\nolimits_{l=1}^i ||G_{(l,i)}^*||_2 k_{(l,i)}
  + \sum\nolimits_{i=1}^m ||G_{(1,i)}^*||_2 k_i
\end{align*}
which proves \eqref{eq:new_goal} for Case 1.

It will be useful to derive an additional bound for use in the proof of Case 2.
In particular, we want to show that there exist constants $K_i > 0$
for $i \in \{1,...,m\}$ such that
\begin{align}
  \left|\left|\sum_{j=1}^i  G^i_j - G_{(1,i)}^* \right|\right|_2
  \leq K_i D(\p).
  \label{eq:goal_ki}
\end{align}
Write
  $G^i_j = c^i_j(G_{(1,i)}^* + \Delta G)$.
Then
\begin{align*}
  J(\lambda-q)^iG_{(1,i)}^* &= -BH_i^*
  = \f{1}{c^i_j} J(\lambda-p^i_j) G^i_j \\
  &= J(\lambda-p^i_j)(G_{(1,i)}^* + \Delta G)
\end{align*}
so
  $\Delta G = J(\lambda-p^i_j)^{-1}J(\lambda-q)^iG_{(1,i)}^* - G_{(1,i)}^*$,
which implies that
  $G^i_j = c^i_j J(\lambda-p^i_j)^{-1}J(\lambda-q)^iG_{(1,i)}^*$.
Thus,
\begin{align*}
  &\left|\left|\sum_{j=1}^i  G^i_j - G_{(1,i)}^* \right|\right|_2
  = \left|\left|\sum_{j=1}^i c^i_j J(\lambda-p^i_j)^{-1}J(\lambda-q)^iG_{(1,i)}^*
  \right.\right.
  \\ & \mkern+180mu \left.\left. \vphantom{\sum_{j=1}^i}
  - J(\lambda-q)^{-i} J(\lambda-q)^i G_{(1,i)}^* \right|\right|_2 \\
  &= \left|\left|\left(\sum_{j=1}^i c^i_j J(\lambda-p^i_j)^{-1} -
  J(\lambda-q)^{-i} \right) J(\lambda-q)^iG_{(1,i)}^* \right|\right|_2 \\
  &\leq \left|\left| \sum_{j=1}^i c^i_j J(\lambda-p^i_j)^{-1} - J(\lambda-q)^{-i}
  \right|\right|_2  
  \left|\left|J(\lambda-q)^iG_{(1,i)}^* \right|\right|_2 
\end{align*}
Therefore, in order to prove \eqref{eq:goal_ki} it suffices to show that
\begin{align}
  \left|\left| \sum_{j=1}^i c^i_j J(\lambda-p^i_j)^{-1} - J(\lambda-q)^{-i}
  \right|\right|_2 \leq K_i' D(\p)
  \label{eq:goal_kd}
\end{align}
for some constants $K_i' > 0$.
For $l \in \{0,...,m_q-1\}$, the $l$th superdiagonal of
$\sum_{j=1}^i c^i_j J(\lambda-p^i_j)^{-1}$ is given by
\begin{align*}
  &\sum_{j=1}^i c^i_j (-1)^l (\lambda-p^i_j)^{-(l+1)}
  = (-1)^l \sum_{j=1}^i c^i_j (\lambda-p^i_j)^{-(l+1)} \\
  & 
  =  {i-1 + l \choose l} \f{(-1)^l}{(\lambda-q)^{(i+l)}} + \epsilon_{(i,l)}, \quad
  |\epsilon_{(i,l)}| \leq K_{(i,l)} D(\p) 
\end{align*}
where we evaluate the sum by Corollary~\ref{cor:hard}(a).
Consider the function $f(x) = x^{-i}$ and note that
$f(J(\lambda-q)) = J(\lambda-q)^{-i}$.
By \cite[Theorem~11.1.1]{Go96},
for $l \in \{0,...,m_q-1\}$, the $l$th superdiagonal of
$J(\lambda-q)^{-i} = f(J(\lambda-q))$ is given by
\begin{align*}
  \f{1}{l!} f^{(l)}(\lambda-q)
  &= \f{i^{(l)}}{l!}\f{(-1)^l}{(\lambda-q)^{(i+l)}} 
  =  {i-1+l \choose l} \f{(-1)^l}{(\lambda-q)^{(i+l)}}.
\end{align*}
Thus, for each $i \in \{1,...,m\}$ and
$l \in \{0,...,m_q-1\}$, the difference between terms in superdiagonal $l$
of the matrix in \eqref{eq:goal_kd} is $\epsilon_{(i,l)}$, which satisfies
$|\epsilon_{(i,l)}| \leq K_{(i,l)} D(\p)$.
Therefore, by Fact 1 in the proof of Corollary~\ref{cor:hard},
this implies that
there exist $K_i' > 0$ such that
\eqref{eq:goal_kd} holds.

Case 2: $q = \lambda$.
Let $\hat{\mathcal{Q}}$ denote the poles in $\Phi_x$.
Substituting \eqref{eq:phix}, \eqref{eq:phiu},
and \eqref{eq:phiux} into \eqref{eq:bh}-\eqref{eq:zero} and
\eqref{eq:final_cons} implies that
\begin{align*}
  J(0)G_{m_q+m}^* &= 0, \quad G_1^* = V
  - \sum\nolimits_{\substack{\hat{q} \in \mathcal{Q} \\ \hat{q} \neq q}}
    G_{(\hat{q},1)}^* \\
    G_{i+1}^* &= J(0)G_i^*, \quad i \in \{m+1,...,m_q+m-1\} \\
    G_{i+1}^* &= J(0)G_i^* + BH_i^*, \quad i \in \{1,...,m\} \\
    J(0)G_{m_q+\tilde{m}} &= 0, \quad
    G_1 = V - \sum\nolimits_{\substack{\hat{q} \in \hat{\mathcal{Q}} \\ \hat{q} \neq q}}
    G_{(\hat{q},1)} \\
    G_{i+1} &= J(0)G_i, \quad i \in \{2,...,m_q+\tilde{m}-1\} \\
    G_2 &= J(0)G_1 + \tilde{m} \sum\nolimits_{i=1}^m BH^i_1 \\
    -J(q-p^i_j)G^i_j &= BH^i_j, \quad i \in \{1+\tilde{m},...,m\}, \\
                            & \mkern+85mu j \in \{1+\tilde{m},...,i\}
\end{align*}
where $G_{(\hat{q},1)}^*$ and $G_{(\hat{q},1)}$ denote the coefficients
of $\f{1}{z-\hat{q}}$ in $\Phi_x^*$ and $\Phi_x$, respectively, for the pole
$\hat{q}$.
For $l \in \{1,...,m_q\}$, define
Define
  $\hat{G}_1 = G_1 + \sum_{i=1+\tilde{m}}^m \sum_{j=1+\tilde{m}}^i G^i_j$, 
  $\hat{G}_l = J(0)^{l-1}\hat{G}_1$, and $\hat{G}_l^* = J(0)^{l-1}G_1^*$.
By \eqref{eq:goal_ki} from Case 1,
\begin{align*}
  ||\hat{G}_1-G_1^*||_2
  &\leq \sum\nolimits_{\substack{\hat{q} \in \mathcal{Q} \\ \hat{q} \neq q}}
  \sum\nolimits_{i=1}^{m_{\hat{q}}}
  \left|\left|G_{(\hat{q},i,1)}^*- \sum\nolimits_{j=1}^i G_{(\hat{q},i,1)}^j
  \right|\right|_2
  \\
  &\leq K_1 D(\p), \quad
  K_1 = \sum\nolimits_{\substack{\hat{q} \in \mathcal{Q} \\ \hat{q} \neq q}}
  \sum\nolimits_{i=1}^{m_{\hat{q}}} K_{(\hat{q},i)}.
\end{align*}
Thus, for $l \in \{1,...,m_q\}$ we have
\begin{align*}
  ||\hat{G}_l-\hat{G}_l^*||_2 
  &\leq ||J(0)^{l-1}||_2 ||\hat{G}_l-\hat{G}_l^*||_1
  \leq K_l D(\p)
\end{align*}
where $K_l = ||J(0)^{l-1}||_2K_1$.
For $l \in \{2,...,m_q+\tilde{m}\}$ define
\begin{align}
  \label{eq:gl}  
    \begin{split}
  \tilde{G}_1 &= G_1 - \hat{G}_1 =
  -\sum\nolimits_{i=1+\tilde{m}}^m \sum\nolimits_{j=1+\tilde{m}}^i  G^i_j \\
  \tilde{G}_l &= G_l - \hat{G}_l =
  -J(0)^{l-1} \sum\nolimits_{i=1+\tilde{m}}^m \sum\nolimits_{j=1+\tilde{m}}^i  G^i_j \\
  &+ J(0)^{l-2} \tilde{m} \sum\nolimits_{i=1}^m BH^i_1
    \end{split}
    \\
    \label{eq:gls}      
    \begin{split}
  \tilde{G}_1^* &= G_1^* - \hat{G}_1^* = 0, \;
  \tilde{G}_l^* = G_l^* - \hat{G}_l^* = \sum\nolimits_{i=1}^{\min\{l-1,m\}} G_{(l,i)}^*
  \\
  & G_{(l,i)}^* = J(0)^{l-(i+1)}BH_i^*, \quad i \in \{1,...,\min\{l-1,m\}\}.
    \end{split}
\end{align}
Then we have
\begin{align}
  \begin{split}
  &\left|\left|
  \Phi_x(z) - \Phi_x^*(z)
  \right|\right|_2 \\
  &\leq
  \left|\left|
  \sum_{i=1}^{m_q+\tilde{m}} \tilde{G}_i \f{1}{(z-q)^i}
  + \sum_{i=1+\tilde{m}}^m \sum_{j=1+\tilde{m}}^m G^i_j \f{1}{(z-p^i_j)}
  \right.\right. \\
  &\quad\;\;\left.\left.
  - \sum_{i=1}^{m_q+m} \tilde{G}_i^* \f{1}{(z-q)^i}
    \right|\right|_2
   +   \left|\left|
  \sum_{i=1}^{m_q} (\hat{G}_i-\hat{G}_i^*) \f{1}{(z-q)^i}
  \right|\right|_2 \\
  &\leq 
  \left|\left|
  \sum_{i=1}^{m_q+\tilde{m}} \tilde{G}_i \f{1}{(z-q)^i}
  + \sum_{i=1+\tilde{m}}^m \sum_{j=1+\tilde{m}}^m G^i_j \f{1}{(z-p^i_j)}
  \right.\right. \\
  &\left.\left.- \sum_{i=1}^{m_q+m} \tilde{G}_i^* \f{1}{(z-q)^i}
    \right|\right|_2
    + \hat{K} D(\p), \quad
    \hat{K} = \sum_{l=1}^{m_q} K_l \f{1}{(1-|q|)^l}.
  \end{split}
  \label{eq:suffice}
\end{align}
So, for the remainder of the proof let $\Phi_x$ and $\Phi_x^*$ denote
\begin{align}
  \Phi_x(z) &= \sum_{i=1}^{m_q+\tilde{m}} \tilde{G}_i \f{1}{(z-q)^i}
  + \sum_{i=1+\tilde{m}}^m \sum_{j=1+\tilde{m}}^m G^i_j \f{1}{(z-p^i_j)}
  \label{eq:phixt}
    \\
    \Phi_x^*(z) &= \sum_{i=1}^{m_q+m} \tilde{G}_i^* \f{1}{(z-q)^i},
    \label{eq:phixs}
\end{align}
and let $G_i$ and $G_i^*$ denote
$\tilde{G}_i$ and $\tilde{G}_i^*$, respectively.
Thus, to prove \eqref{eq:new_goal}, by \eqref{eq:suffice} it suffices to show
that there exists $K > 0$ such that
$\left|\left|\Phi_x(z)-\Phi_x^*(z)\right|\right|_2 \leq K D(\p)$.
By \eqref{eq:phiu} we have
\begin{align}
  G^i_j &= -J(q-p^i_j)^{-1}BH^i_j
  = -c^i_jJ(q-p^i_j)^{-1}BH_i^* \label{eq:gij_true}
\end{align}
for all $i \in \{\tilde{m}+1,...,m\}$ and $j \in \{\tilde{m}+1,...,i\}$.
We compute
\begin{align*}
  &\Phi_x(z) 
  \overset{\substack{\eqref{eq:gl} \\ \eqref{eq:phixt}}}{=}
  \sum_{i=1+\tilde{m}}^m \sum_{j=1+\tilde{m}}^i G^i_j \f{1}{z-p^i_j}
  \nonumber\\
  &+ \tilde{m} \sum_{l=2}^{m_q+1}  J(0)^{l-2} \sum_{i=1}^m BH^i_1 \f{1}{(z-q)^l}
  \nonumber \\
  &- \sum_{l=1}^{m_q} J(0)^{l-1} \sum_{i=1+\tilde{m}}^m \sum_{j=1+\tilde{m}}^i
  G^i_j \f{1}{(z-q)^l} \nonumber \\
  &
  \overset{\text{Lemma}~\ref{lem:case2}}{=}
  \left( \sum_{l=0}^{m_q-1} J(0)^l \tilde{m} \f{1}{(z-q)^{l+2}} \right) BH_1^*
  \\
  &+ \sum_{i=1+\tilde{m}}^m \left(
  \sum_{j=1+\tilde{m}}^i
  J(q-p^i_j)^{-1} \f{-c^i_j}{z-p^i_j} \f{(p^i_j-q)^{m_q}}{(z-q)^{m_q}} \right.
  \\
  &+ \sum_{l=0}^{m_q-i-1}  \sum_{k=0}^{m_q-i-1-l} J(0)^{l}
  \sum_{j=1+\tilde{m}}^i \f{c^i_j  (p^i_j-q)^{m_q-2-k-l}}{(z-q)^{m_q-k}} \\
  &\left. \vphantom{\sum_{l=1}^2}
  + \tilde{m} c^i_1 J(0)^{m_q-1} \f{1}{(z-q)^{m_q+1}} \right) BH_i^*.
\end{align*}
By \eqref{eq:phixs} and \eqref{eq:gls} we have
\begin{align*}
  \Phi_x^* = \sum_{i=1}^m \sum_{l=0}^{m_q-1} J(0)^l BH_i^* \f{1}{(z-q)^{i+l+1}}.
\end{align*}
Therefore, for $i \in \{1,...,m\}$ and $l \in \{0,...,m_q-1\}$,
the $l$th superdiagonal
of the term multiplying $BH_i^*$ in $\Phi_x^*$ is given by
$\f{1}{(z-q)^{i+l+1}}$.
Thus, by Lemma~\ref{lem:case2}, for every $j, j' \in \{1,...,m_q\}$,
\begin{align*}
  &\left|(\Phi_x(z)-\Phi_x^*(z))_{(j,j')} \right| \leq
  \sum_{i=1}^m \sum_{l=0}^{m_q-j} K_{i,l} D(\p) |(BH_i^*)_{(j+l,j')}| \\
 &\leq K^{(j,j')} D(\p), \quad
  K^{(j,j')} = \sum_{i=1}^m \sum_{l=0}^{m_q-1} K_{i,l} ||BH_i^*||_F.
\end{align*}
By Fact 1 in the proof of Corollary~\ref{cor:hard},
this implies that $||\Phi_x(z)-\Phi_x^*(z)||_2 \leq K D(\p)$
for some $K > 0$, which proves \eqref{eq:new_goal} for Case 2.
\end{proof}

Theorem~\ref{thm:gen} applies the approximation error bounds of
Lemma~\ref{lem:bound} to the optimal
solution of \eqref{eq:spa_inf} to obtain the desired suboptimality bounds.

\begin{proof}[Proof of Theorem~\ref{thm:gen}]
Let $(\Phi_x^*,\Phi_u^*)$ be an optimal solution to \eqref{eq:spa_inf}.
By Lemma~\ref{lem:bound}, there exist $\Phi_x, \Phi_u \in \f{1}{z} \h$
which are a feasible solution to \eqref{eq:spa_obj}-\eqref{eq:spa_imp} and
satisfy the approximation error bounds \eqref{eq:ubound}-\eqref{eq:xbound}.
Letting $J(\Phi_x,\Phi_u)$ denote the value of the objective of
\eqref{eq:spa_inf} for $(\Phi_x, \Phi_u)$,
we compute
\begin{align*}
  J(\p) & 
  \overset{\substack{\text{definition} \\ \text{of optimum}}}{\leq}
  J(\Phi_x,\Phi_u) \\
  &
  \overset{\substack{\text{adding} \\ \text{zero}}}{=}
    \left|\left|
    C(\Phi_x(z)-\Phi_x^*(z) + \Phi_x^*(z))\hat{B} \right.\right.
    \\
    &\left.\left.
    + D(\Phi_u(z)-\Phi_u^*(z) + \Phi_u^*(z))\hat{B}
    - T_{\text{desired}}(z)
    \right|\right|_{\mathcal{H}_2} \\
    & + \lambda \left|\left|
    C(\Phi_x(z)-\Phi_x^*(z) + \Phi_x^*(z))\hat{B} \right.\right.\\
    &\left.\left. + D(\Phi_u(z)-\Phi_u^*(z)+\Phi_u^*(z))\hat{B}
    - T_{\text{desired}}(z)   
    \right|\right|_{\mathcal{H}_\infty} \\
    & 
    \overset{\substack{\text{triangle} \\ \text{inequality}}}{\leq}
    \left|\left|C\Phi_x^*(z)\hat{B} + D\Phi_u^*(z)\hat{B}
    - T_{\text{desired}}(z) \right|\right|_{\mathcal{H}_2}  \\
    & + \lambda \left|\left|
    C\Phi_x^*(z)\hat{B} + D\Phi_u^*(z)\hat{B} - T_{\text{desired}}(z)
    \right|\right|_{\mathcal{H}_\infty} \\
    &+ \left|\left|
    C(\Phi_x(z)-\Phi_x^*(z))\hat{B}\right|\right|_{\mathcal{H}_2} \\
    &+ \left|\left|D(\Phi_u(z)-\Phi_u^*(z))\hat{B}
    \right|\right|_{\mathcal{H}_2} \\
    & + \lambda \left|\left|
    C(\Phi_x(z)-\Phi_x^*(z))\hat{B}
    \right|\right|_{\mathcal{H}_\infty} \\
    &+ \lambda \left|\left|D(\Phi_u(z)-\Phi_u^*(z))\hat{B}
    \right|\right|_{\mathcal{H}_\infty} \\
    & 
    \overset{\substack{\eqref{eq:ubound} \\ \eqref{eq:xbound}}}{\leq}
    J^* + K D(\p) \\
    K &= ||C||_F K^x_2 ||\hat{B}||_F
    + ||D||_F K^u_2 ||\hat{B}||_F \\
    &+ \lambda ||C||_2 K^x_\infty ||\hat{B}||_2
    + \lambda ||D||_2 K^u_\infty ||\hat{B}||_2.
\end{align*}
This implies that
\begin{align*}
  \f{J(\p) - J^*}{J^*} \leq \f{K}{J^*} D(\p)
\end{align*}
which is the desired bound, and where
$K = K(\mathcal{Q},G_{(q,j)}^*,H_{(q,j)}^*,r,\delta)$ by the proofs of
\cite[Theorem 1]{Fi22a} and Lemma~\ref{lem:bound}.
\end{proof}

\begin{proof}[Proof of Corollary~\ref{cor:spiral}]
  Combining Theorem~\ref{thm:gen} with the result and proof of
  \cite[Theorem~5]{Fi22a} yields the desired result.
\end{proof}

\section{Conclusion}\label{sec:con}

This work combined SLS with SPA to develop a new control design method.
Unlike DBC, SPA does not result in deadbeat control, feasibility
is automatic so it does not require slack variables
which lead to additional suboptimality, and it can be solved
by a single SDP, as opposed to the iterative algorithm that DBC requires.
A suboptimality certificate was provided for SPA 
which, unlike the DBC bound, does not require a sufficiently long time
horizon that the optimal impulse response has already decayed, and does not
depend on this decay rate.
The bound is specialized for the Archimedes spiral pole selection \cite{Fi22a}.
An example 
shows that SPA achieves much
better matching with the optimal solution than DBC with orders of magnitude
fewer poles.
Future work should address extensions to state and
input constraints, application of SPA to output feedback,
and extensions to continuous-time.

\bibliographystyle{ieeetr}
\bibliography{refs}

\vspace{-20pt}


\begin{IEEEbiography}
  [{\includegraphics[width=1in,height=1.25in,clip,keepaspectratio]
      {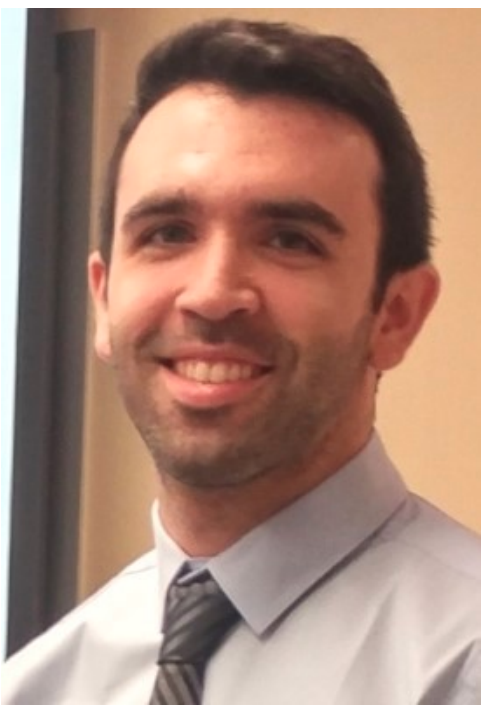}}]{Michael W. Fisher} is an Assistant Professor
  in the Department of Electrical and Computer Engineering at the University of
  Waterloo, Canada.  He was a postdoctoral researcher with
  the Automatic Control and Power System Laboratories at
  ETH Zurich.  He received his Ph.D. in Electrical Engineering:
  Systems at the University of Michigan, Ann Arbor in 2020, and a
  M.Sc. in Mathematics from the same institution in 2017. He received
  his B.A. in Mathematics and Physics from Swarthmore College in 2014.
  His research interests are in dynamics, control, and optimization of
  complex systems, with an emphasis on electric power systems.
  He was a finalist for the 2017 Conference on Decision and Control (CDC)
  Best Student Paper Award and a recipient
  of the 2019 CDC Outstanding Student Paper Award.
\end{IEEEbiography}

\vspace{-20pt}


\begin{IEEEbiography}
  [{\includegraphics[width=1in,clip,keepaspectratio]
      {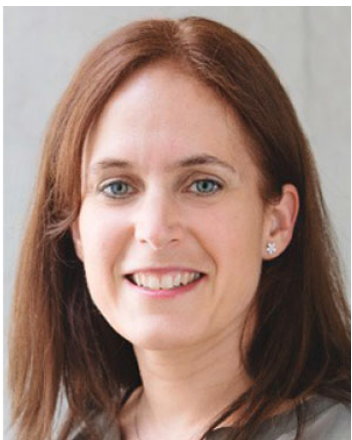}}]{Gabriela Hug} was born in Baden,
  Switzerland. She received the M.Sc. degree in electrical engineering
  and the Ph.D. degree from the Swiss Federal Institute of Technology
  (ETH), Zurich, Switzerland, in 2004 and 2008, respectively. After
  the Ph.D. degree, she worked with the Special Studies Group of Hydro
  One, Toronto, ON, Canada, and from 2009 to 2015, she was an
  Assistant Professor with Carnegie Mellon University, Pittsburgh, PA,
  USA. She is currently a Professor with the Power Systems Laboratory,
  ETH Zurich. Her research is dedicated to control and optimization of
  electric power systems.
\end{IEEEbiography}

\vspace{-20pt}


\begin{IEEEbiography}
  [{\includegraphics[width=1in,height=1.25in,clip,keepaspectratio]
      {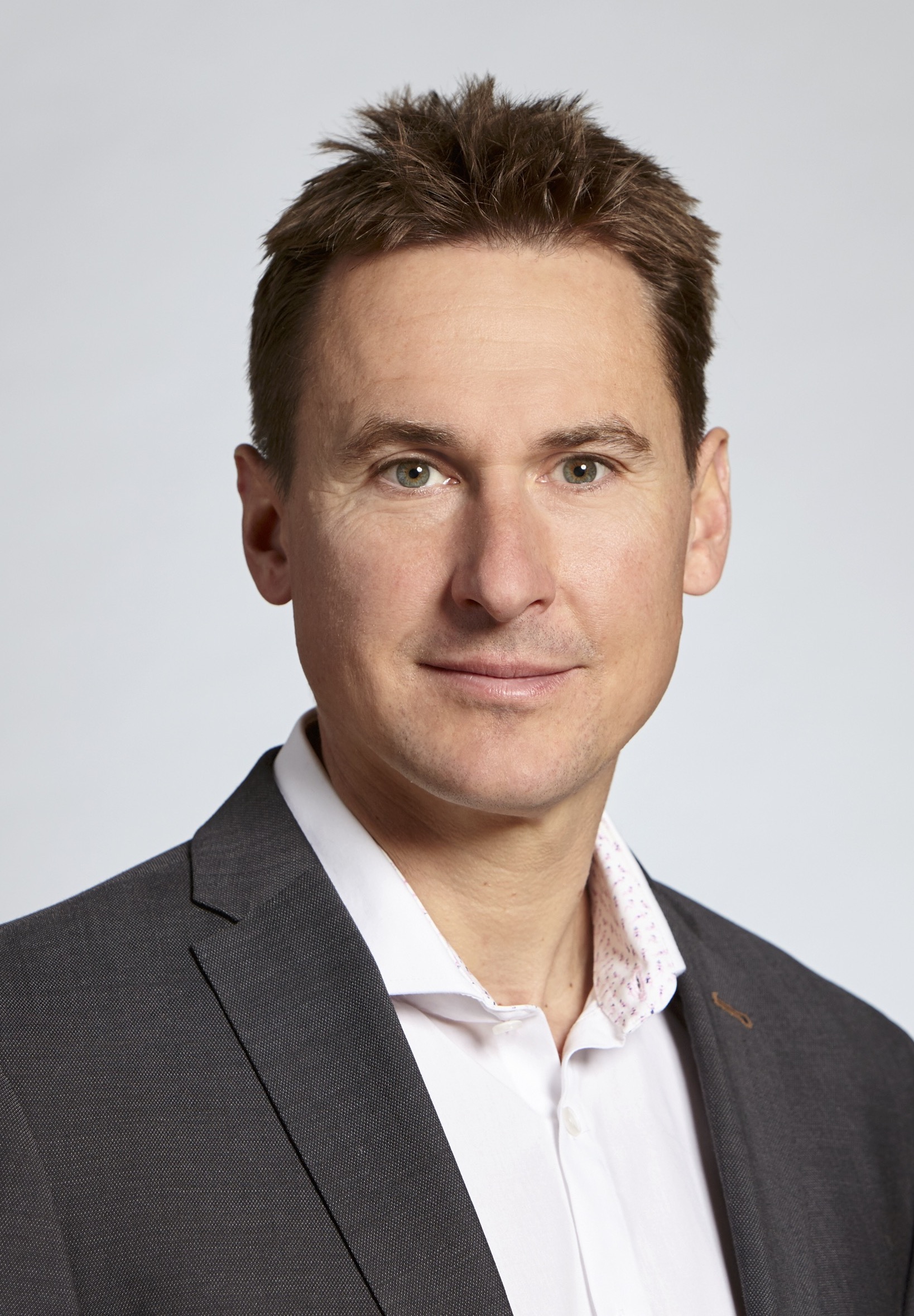}}]{Florian D\"{o}rfler} is an Associate
  Professor at the Automatic Control Laboratory at ETH Zurich,
  Switzerland, and the Associate Head of the Department of Information
  Technology and Electrical Engineering. He received his Ph.D. degree
  in Mechanical Engineering from the University of California at Santa
  Barbara in 2013, and a Diplom degree in Engineering Cybernetics from
  the University of Stuttgart, Germany, in 2008. From 2013 to 2014 he
  was an Assistant Professor at the University of California
  Los
  Angeles.
  His primary research interests are centered around control,
  optimization, and system theory with applications in network
  systems, especially electric power grids.
  He is a recipient of the
  distinguished young research awards by IFAC (Manfred Thoma Medal
  2020) and EUCA (European Control Award 2020).
  He and his students were winners or finalists of numerous Best Paper
  Awards.
  His students were
  winners or finalists for Best Student Paper awards at the European
  Control Conference (2013, 2019), the American Control Conference
  (2016), the Conference on Decision and Control (2020), the PES
  General Meeting (2020), the PES PowerTech Conference (2017), and the
  International Conference on Intelligent Transportation Systems
  (2021).
  He is furthermore a recipient of the 2010 ACC Student Best
  Paper Award, the 2011 O. Hugo Schuck Best Paper Award, the 2012-2014
  Automatica Best Paper Award, the 2016 IEEE Circuits and Systems
  Guillemin-Cauer Best Paper Award, and the 2015 UCSB ME Best PhD
  award.
\end{IEEEbiography}

\end{document}